\theoremstyle{plain}
\newtheorem{thm}{Theorem}[section]
\newtheorem{corollary}[thm]{Corollary}
\newtheorem{prop}[thm]{Proposition}
\newtheorem*{theorem*}{Theorem}
\theoremstyle{definition}
\newtheorem{rem}[thm]{Remark}
\numberwithin{equation}{section}
\newcommand{\mytilde}{\raise.17ex\hbox{$\scriptstyle\mathtt{\sim}$}}
\newcommand{\RgeO}{\ensuremath{\mathbb{R}_{ \geq 0}}}
\newcommand{\Rat}[1]{\ensuremath{\mathbb{R}^{#1}}}
\newcommand{\Ni}{\ensuremath{|\mathcal{N}_i|}}
\begin{document}

\author{D. Boskos}
\address{Department of Automatic Control, School of Electrical Engineering, KTH Royal Institute of Technology, Osquldas v\"ag 10, 10044, Stockholm, Sweden}
\email{boskos@kth.se}

\author{D. V. Dimarogonas}
\address{Department of Automatic Control, School of Electrical Engineering, KTH Royal Institute of Technology, Osquldas v\"ag 10, 10044, Stockholm, Sweden}
\email{dimos@kth.se}

\begin{abstract}
In this report we provide a decentralized robust control approach, which guarantees that connectivity of a multi-agent network is maintained when certain bounded input terms are added to the control strategy. Our main motivation for this framework is to determine abstractions for multi-agent systems under coupled constraints which are further exploited for high level plan generation.
\end{abstract}

\title{Robust Connectivity Analysis for Multi-Agent Systems}
\maketitle

\section{Introduction}
Cooperative control of multi-agent systems constitutes a highly active area of research during the last two decades. Typical objectives are the consensus problem, which is concerned with finding a protocol that achieves convergence to a common value \cite{RwBeAe05}, reference tracking \cite{AaMaAa10} and formation control \cite{JmEm07}. A common feature in the approach to these problems is the design of decentralized control laws in order to achieve a global goal.

In the case of mobile robot networks with limited sensing and communication ranges, connectivity maintenance plays a fundamental role  \cite{ZmEmPg11}. In particular, it is required to constrain the control input in such a way that the network topology remains connected during the evolution of the system. For instance, in \cite{JmEm07} the rendezvous and formation control problems are studied while preserving connectivity, whereas in \cite{DdKk08} swarm aggregation is achieved by means of a control scheme that guarantees both connectivity and collision avoidance.

In our approach we provide a control law for each agent comprising of a decentralized feedback component and a free input term, which ensures connectivity maintenance, for all possible free input signals up to a certain bound of magnitude. The motivation for this approach comes from distributed control and coordination of multi-agent systems with locally assigned Linear Temporal Logic (LTL) specifications. In particular, by virtue of the invariance and robust connectivity maintenance properties, it is possible to define well posed decentralized abstractions for the multi-agent system which can be exploited for motion planning. The latter problem has been studied in our recent work \cite{BdDd15} for the single integrator dynamics case.

In this work, we design a bounded control law which results in network connectivity of the system for all future times provided that the initial relative distances of interconnected agents and the free input terms satisfy appropriate bounds. Furthermore, in the case of a spherical domain, it is shown that adding an extra repulsive vector field near the boundary of the domain can also guarantee invariance of the solutions and simultaneously maintain the robust connectivity property. The latter framework enables the construction of finite abstractions for the single integrator case.

The rest of the report is organized as follows. Section 2 introduces basic notation and preliminaries. In Section 3, results on robust connectivity maintenance are provided and explicit controllers which establish this property are designed. In Section 4, the corresponding controllers are appropriately modified, in order to additionally guarantee invariance of the solution for the case of a spherical domain. We summarize the results and discuss possible extensions in Section 5.

\section{Preliminaries and Notation}

\subsection{Notation}

We use the notation $|x|$ for the Euclidean norm of a vector $x\in\Rat{n}$. For a matrix $A\in\Rat{m\times n}$ we use the notation $|A|:= \max\{|Ax|: x\in\Rat{n}\}$ for the induced Euclidean matrix norm and $A^{T}$ for its transpose. For two vectors $x,y\in\Rat{n}(=\Rat{n\times 1})$ we denote their inner product by $\langle x,y \rangle:=x^{T}y$. Given a subset $S$ of $\Rat{n}$, we denote by ${\rm cl}(S)$, ${\rm int}(S)$ and $\partial S$ its closure, interior and boundary, respectively, where $\partial S:={\rm cl}(S)\setminus{\rm int}(S)$. For $R>0$, we denote by $B(R)$ the closed ball with center $0\in\Rat{n}$ and radius $R$. Given a vector $x=(x^1,\ldots,x^n)\in\Rat{n}$ we define the component operators $c_l(x):=x^l$, $l=1,\ldots,n$. Likewise, for a vector $x=(x_1,\ldots,x_N)\in\Rat{Nn}$ we define the component operators $c_l(x):=(c_l(x_1),\ldots,c_l(x_N))\in\Rat{N}$, $l=1,\ldots,n$.

Consider a multi-agent system with $N$ agents. For each agent $i\in\{1,\ldots,N\}:=\mathcal{N}$ we use the notation $\mathcal{N}_i$ for the set of its neighbors and $\Ni$ for its cardinality. We also consider an ordering of the agent's neighbors which we denote by $j_1,\ldots,j_{\Ni}$. $\mathcal{E}$ stands for the undirected network's edge set and $\{i,j\}\in\mathcal{E}$ iff $j\in\mathcal{N}_i$. The network graph $\mathcal{G}:=(\mathcal{N},\mathcal{E})$ is connected if for each $i,j\in \mathcal{N}$ there exists a finite sequence $i_1,\ldots,i_l\in\mathcal{N}$ with $i_1=i$, $i_l=j$ and $\{i_k,i_{k+1}\}\in\mathcal{E}$, for all $k=1,\ldots,l-1$. Consider an arbitrary orientation of the network graph $\mathcal{G}$, which assigns to each edge $\{i,j\}\in\mathcal{E}$ precisely one of the ordered pairs $(i,j)$ or $(j,i)$. When selecting the pair $(i,j)$ we say that $i$ is the tail and $j$ is the head of edge $\{i,j\}$. By considering a numbering $l=1,\ldots,M$ of the graph's edge set we define the $N\times M$ incidence matrix $D(\mathcal{G})$ corresponding to the particular orientation as follows:
$$
D(\mathcal{G})_{kl}:=\left\lbrace\begin{array}{ll} 1, & \textup{if vertex}\;  k\; \textup{is the head of edge} \; l \\  -1, & \textup{if vertex} \;  k \; \textup{is the tail of edge}\; l \\  0, & \textup{otherwise} \end{array}\right.
$$

\noindent The graph Laplacian $L(\mathcal{G})$ is the $N\times N$ positive semidefinite symmetric matrix $L(\mathcal{G}):=D(\mathcal{G})\times D(\mathcal{G})^T$. If we denote by $\mathbb{1}$ the vector $(1,\ldots,1)\in\Rat{N}$, then $L(\mathcal{G})\mathbb{1}=D(\mathcal{G})^T\mathbb{1}=0$. Let $0=\lambda_{1}(\mathcal{G})\le\lambda_{2}(\mathcal{G})\le\cdots\le\lambda_{N}(\mathcal{G})$ be the ordered eigenvalues of $L(\mathcal{G})$. Then each corresponding set of eigenvectors is orthogonal and $\lambda_{2}(\mathcal{G})>0$ iff $\mathcal{G}$ is connected.

\subsection{Problem Statement}

\noindent We focus on single integrator multi-agent systems with dynamics
\begin{equation}\label{multiagent:single:integrator}
\dot{x}_i=u_{i}, x_{i}\in\Rat{n}, i=1,\ldots,N
\end{equation}

\noindent We aim at designing decentralized control laws of the form

\begin{equation}\label{contol:law:connectedness}
u_{i}:=k_{i}(x_{i},x_{j_{1}},\ldots,x_{j_{|\mathcal{N}_{i}|}})+v_{i}
\end{equation}

\noindent which ensure that appropriate apriori bounds on the initial relative distances of interconnected agents guarantee network connectivity for all future times, for all free inputs $v_{i}$ bounded by certain constant. In particular, we assume that the network graph is connected as long as the maximum distance between two interconnected agents does not exceed a given positive constant $R$. In addition, we make the following connectivity hypothesis for the initial states of the agents.

\noindent \textbf{(ICH)} We assume that the agents' communication graph is initially connected and that
\begin{equation}\label{initial:relative:positions}
\max\{|x_{i}(0)-x_{j}(0)|:\{i,j\}\in\mathcal{E}\}\le\tilde{R}\;\textup{for certain constant}\; \tilde{R}\in (0,R)
\end{equation}

\subsection{Potential Functions}

We proceed by defining certain mappings which we exploit in order to design the control law \eqref{contol:law:connectedness} and prove that network connectivity is maintained. Let $r:\RgeO\to\RgeO$ be a continuous function satisfying the following property.

\noindent \textbf{(P)} $r(\cdot)$ is increasing and $r(0)>0$.

\noindent Also, consider the integral
\begin{equation} \label{function:P}
P(\rho)=\int_{0}^{\rho}r(s)sds,\rho\in\RgeO
\end{equation}

\noindent For each pair $(i,j)\in\mathcal{N}\times\mathcal{N}$ with $\{i,j\}\in\mathcal{E}$ we define the potential function $V_{ij}:\Rat{Nn}\to\RgeO$ as
\begin{equation} \label{Vij:dfn}
V_{ij}(x)=P(|x_{i}-x_{j}|),\forall x=(x_1,\ldots,x_N)\in\Rat{Nn}
\end{equation}

\noindent Notice that $V_{ij}(\cdot)=V_{ji}(\cdot)$. Furthermore, it can be shown that $V_{ij}(\cdot)$ is continuously differentiable and that
\begin{equation} \label{Vij:partial:xi2}
\frac{\partial}{\partial x_{i}}V_{ij}(x)=r(|x_{i}-x_{j}|)(x_{i}-x_{j})^{T}
\end{equation}

\noindent where $\frac{\partial}{\partial x_{i}}$ stands for the derivative with respect to the $x_{i}$-coordinates.

\begin{rem}
Notice that we are only interested in the values of the mappings $r(\cdot)$ and $P(\cdot)$ in the interval $[0,R]$, which stands for the maximum distance that two interconnected agents may achieve before losing connectivity. Yet, defining them on the whole positive line provides us certain technical flexibilities for the analysis employed in the proof of connectivity maintenance.
\end{rem}

\section{Connectivity Analysis}\label{Lyapunov:analysis}

\noindent In the following proposition we provide a control law \eqref{contol:law:connectedness} and an upper bound on the magnitude of the input terms $v_i(\cdot)$ which guarantee connectivity of the multi-agent network.

\begin{prop} \label{connectivity:maintainance}
For the multi-agent system \eqref{multiagent:single:integrator}, assume that (ICH) is fulfilled and define the control law
\begin{equation}\label{feedback:CaseI}
u_{i}=-\sum_{j\in\mathcal{N}_{i}}r(|x_{i}-x_{j}|)(x_{i}-x_{j})+v_{i}
\end{equation}

\noindent for certain continuous $r(\cdot)$ satisfying Property (P). Also, consider a constant $\delta>0$ and define
\begin{equation} \label{constant:K}
K:=\frac{2\sqrt{N(N-1)}|D(\mathcal{G})^{T}|}{\lambda_{2}(\mathcal{G})^{2}}
\end{equation}

\noindent where $D(\mathcal{G})$ is the incidence matrix of the systems' graph and $\lambda_{2}(\mathcal{G})$ the second eigenvalue of the graph Laplacian. We assume that
the positive constant $\delta$, the maximum initial distance $\tilde{R}$ and the function $r(\cdot)$ satisfy the restrictions
\begin{equation}  \label{constant:delta:general:constraint}
\delta\le\frac{1}{K}r(0)^{2}\frac{s}{r(s)},s\ge\tilde{R}
\end{equation}

\noindent with $K$ as given in \eqref{constant:K} and
\begin{equation} \label{Pat:Rtilde:vs:R}
MP(\tilde{R})\le P(R)
\end{equation}

\noindent where $P(\cdot)$ is given in \eqref{function:P}, and $M=|\mathcal{E}|$ is the cardinality of the system's graph edge set. Then, the system remains connected for all positive times, provided that the input terms $v_i(\cdot)$, $i=1,\ldots,N$ satisfy
\begin{equation} \label{input:individual:bound}
|v_{i}(t)|\le\delta,\forall t\ge 0
\end{equation}
\end{prop}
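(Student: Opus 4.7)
The natural approach is a Lyapunov argument with the aggregate potential
$$
V(x) := \sum_{\{i,j\}\in\mathcal{E}} V_{ij}(x),
$$
which by (ICH) and \eqref{Pat:Rtilde:vs:R} satisfies $V(x(0))\le MP(\tilde R)\le P(R)$. The plan is to prove that $V(x(t))\le MP(\tilde R)$ for all $t\ge 0$; since each summand is nonnegative, this would yield $V_{ij}(x(t))\le P(R)$ for every edge, and by strict monotonicity of $P$ it forces $|x_i(t)-x_j(t)|\le R$, preserving connectivity.

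Differentiating $V$ along \eqref{multiagent:single:integrator}--\eqref{feedback:CaseI} and using \eqref{Vij:partial:xi2} gives
$$
\dot V = -\sum_{i=1}^N |y_i|^2 + \sum_{i=1}^N \langle y_i, v_i\rangle, \qquad y_i := \sum_{j\in\mathcal{N}_i} r(|x_i-x_j|)(x_i-x_j).
$$
Stacking $y=(y_1,\ldots,y_N)\in\mathbb{R}^{Nn}$, Cauchy--Schwarz together with $|v_i|\le\delta$ yield $\dot V\le -|y|^2+\sqrt N\,\delta\,|y|$, which is non-positive as soon as $|y|\ge\sqrt N\,\delta$. The argument therefore reduces to the key implication: $V(x)\ge MP(\tilde R)$ implies $|y(x)|\ge \sqrt N\,\delta$.

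To prove this implication, first note that $V\ge MP(\tilde R)$ forces at least one edge $e^{\ast}=\{i^{\ast},j^{\ast}\}\in\mathcal{E}$ to have length $s:=|x_{i^{\ast}}-x_{j^{\ast}}|\ge\tilde R$, for otherwise $V=\sum_e P(|x_i-x_j|)<MP(\tilde R)$. Writing $y=L_w(x)x$, where $L_w:=D(\mathcal{G})\,W(x)\,D(\mathcal{G})^T$ is the state-dependent weighted Laplacian with weights $r(|x_i-x_j|)\ge r(0)>0$ provided by (P), the PSD comparison $L_w\succeq r(0)L(\mathcal{G})$ gives $\mu_2(L_w)\ge r(0)\lambda_2(\mathcal{G})$, hence
$$
|y|\ge r(0)\,\lambda_2(\mathcal{G})\,|x_\perp|,
$$
where $x_\perp$ denotes the projection of $x$ onto the orthogonal complement of $\ker L(\mathcal{G})$. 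Moreover, $|D(\mathcal{G})^T x|^2=x_\perp^T L(\mathcal{G})x_\perp\le|D(\mathcal{G})^T|^2|x_\perp|^2$ combined with the trivial single-edge estimate $|D(\mathcal{G})^T x|\ge s$ produces $|x_\perp|\ge s/|D(\mathcal{G})^T|$. Assembling these bounds with $s\ge\tilde R$ and invoking \eqref{constant:delta:general:constraint}--\eqref{constant:K} is what should deliver $|y|\ge\sqrt N\,\delta$.

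Having established the implication, a standard invariance argument closes the proof: if $V(x(t_1))>MP(\tilde R)$ at some $t_1>0$, pick the last $t_0\in[0,t_1]$ at which $V(x(t_0))\le MP(\tilde R)$; on $(t_0,t_1]$ the implication forces $\dot V\le 0$, contradicting $V(x(t_1))>V(x(t_0))$. Hence $V\le MP(\tilde R)\le P(R)$ for all $t\ge 0$, and connectivity is preserved. The main obstacle I anticipate is the \emph{quantitative} matching: producing an $|y|$ lower bound whose exact constants reproduce the $r(0)^2 s/r(s)$ scaling in \eqref{constant:delta:general:constraint} and the specific definition of $K$ in \eqref{constant:K}. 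The spectral estimate sketched above is the natural candidate, but will likely need refinement (for instance, by replacing the crude bound $L_w\succeq r(0)L$ by one that retains information about the weight $r(s)$ carried by the long edge) to recover exactly the constants appearing in the statement.
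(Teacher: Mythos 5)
Your overall strategy (the edge-sum potential, showing $\dot V\le 0$ whenever some edge length reaches $\tilde R$, and a ``last crossing time'' sublevel-set argument) is exactly the paper's, and the reduction of connectivity to the bound $V\le MP(\tilde R)\le P(R)$ is fine. The genuine gap is the quantitative step you yourself flag, and it is not a matter of polishing: as written, your chain certifies $\dot V\le 0$ only under a condition on $\delta$ that is \emph{not} implied by \eqref{constant:delta:general:constraint}. Concretely, from $\dot V\le -|y|^2+\sqrt N\,\delta|y|$ you need $|y|\ge\sqrt N\,\delta$, and your estimates $|y|\ge r(0)\lambda_2(\mathcal G)|x^{\perp}|$, $|x^{\perp}|\ge s/|D(\mathcal G)^T|$ certify this only when $\delta\le\frac{r(0)\lambda_2(\mathcal G)}{\sqrt N\,|D(\mathcal G)^T|}\,s$. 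The hypothesis \eqref{constant:delta:general:constraint} with $K$ as in \eqref{constant:K} allows $\delta$ up to $\frac{r(0)^2\lambda_2(\mathcal G)^2}{2\sqrt{N(N-1)}\,|D(\mathcal G)^T|}\frac{s}{r(s)}$, and the ratio of this to your threshold is $\frac{r(0)\lambda_2(\mathcal G)}{2\sqrt{N-1}\,r(s)}$, which exceeds $1$ for instance for the complete graph with $N\ge 3$ and $r\equiv 1$ (there $\lambda_2(\mathcal G)=N>2\sqrt{N-1}$). So for such graphs there are inputs admissible under the proposition for which your argument cannot conclude $\dot V\le 0$; the proof does not close in the stated generality.

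Two ways to close it. The paper does \emph{not} use Cauchy--Schwarz at the level $\sum_i\langle y_i,v_i\rangle$; it writes $c_l(x)^TL_w(x)c_l(v)=c_l(\Delta x)^TW(x)D(\mathcal G)^Tc_l(v)$ so that the weight bound $|W(x)|\le r(|\Delta x|_\infty)$ survives in the cross term, and it lower-bounds $|x^{\perp}|^2\ge\frac{1}{2\sqrt{N-1}}|\Delta x|\,|\Delta x|_{\infty}$ (its Facts III and IV); this reproduces exactly the $r(0)^2 s/r(s)$ threshold with the constant $K$ of \eqref{constant:K}. Alternatively, you can keep your (in fact sharper) bound $\dot V\le-|y|^2+\sqrt N\,\delta|y|$ and repair only the $|x^{\perp}|$ estimate: replace $|x^{\perp}|\ge s/|D(\mathcal G)^T|$ by the elementary bound $\sqrt 2\,|x^{\perp}|\ge|\Delta x|_{\infty}=s$ (the paper's Fact IV, a consequence of $|x_i-\tilde x|^2+|x_j-\tilde x|^2\ge\frac12|x_i-x_j|^2$ for the longest edge). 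Then you need $\delta\le\frac{r(0)\lambda_2(\mathcal G)}{\sqrt{2N}}\,s$, and this \emph{is} implied by \eqref{constant:delta:general:constraint}, since $|D(\mathcal G)^T|=\lambda_N(\mathcal G)^{1/2}\ge\lambda_2(\mathcal G)^{1/2}$, $r(s)\ge r(0)$ and $\lambda_2(\mathcal G)\le N\le 2(N-1)$. With that single refinement your route goes through and is essentially the paper's argument with a slightly different bookkeeping of the cross term; the remaining differences (working with the level $MP(\tilde R)$ instead of the paper's $\tfrac12 P(R)$, and your version of the contradiction argument) are immaterial.
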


\begin{proof}
For the proof we follow parts of the analysis in \cite{JmEm07} (see also \cite[Section 7.2]{MmEm10}). Consider the energy function
\begin{equation} \label{V:dfn}
V:=\frac{1}{2}\sum_{i=1}^{N}\sum_{j\in\mathcal{N}_{i}}V_{ij}
\end{equation}

\noindent where the mappings $V_{ij}$, $\{i,j\}\in\mathcal{E}$ are given in \eqref{Vij:dfn}. Then it follows from \eqref{Vij:partial:xi2} that
\begin{equation} \label{V:partial:xi}
\frac{\partial}{\partial x_{i}}V(x)=\sum_{j\in\mathcal{N}_{i}}r(|x_{i}-x_{j}|)(x_{i}-x_{j})^{T}
\end{equation}

\noindent Also, in accordance with \cite[Section 7.2]{MmEm10} we have for $l=1,\ldots,n$ that
\begin{equation} \label{componentwise:feedback:rhs}
c_{l}\left(\sum_{j\in\mathcal{N}_{i}}r(|x_{i}-x_{j}|)(x_{i}-x_{j})\right)=L_{w}(x)c_{l}(x)
\end{equation}

\noindent The weighted Laplacian matrix $L_{w}(x)$ is given as
\begin{equation}\label{weighted:Laplacian}
L_{w}(x)=D(\mathcal{G})W(x)D(\mathcal{G})^{T}
\end{equation}

\noindent where $D(\mathcal{G})$ is the incidence matrix of the communication graph (see Notation) and
\begin{equation}\label{function:W}
W(x):={\rm diag}\{w_{1}(x),\ldots,w_{M}(x)\}:={\rm diag}\{r(|x_{i}-x_{j}|),\{i,j\}\in\mathcal{E}\}
\end{equation}

\noindent (recall that $M=|\mathcal{E}|$). Then, by evaluating the time derivative of $V$ along the trajectories of \eqref{multiagent:single:integrator}-\eqref{feedback:CaseI} and taking into account \eqref{V:dfn}, \eqref{V:partial:xi} and \eqref{componentwise:feedback:rhs} we get
\begin{align}
\dot{V}&=-\sum_{l=1}^{n}c_{l}\left(\frac{\partial}{\partial x}V(x)\right)c_{l}(\dot{x}) \nonumber \\
&=-\sum_{l=1}^{n}c_{l}(x)^{T}L_{w}(x)(L_{w}(x)c_{l}(x)-c_{l}(v)) \nonumber \\
&=-\sum_{l=1}^{n}c_{l}(x)^{T}L_{w}(x)^{2}c_{l}(x)+\sum_{l=1}^{n}c_{l}(x)^{T}L_{w}(x)c_{l}(v) \nonumber \\
&\le-\sum_{l=1}^{n}c_{l}(x)^{T}L_{w}(x)^{2}c_{l}(x)+\left|\sum_{l=1}^{n}c_{l}(x)^{T}L_{w}(x)c_{l}(v)\right| \label{Lyapunov:function:derivative}
\end{align}

We want to provide appropriate bounds for the right hand side of \eqref{Lyapunov:function:derivative} which can guarantee that the sign of $\dot{V}$ is negative whenever the maximum distance between two agents exceeds the bound $\tilde{R}$ on the maximum initial distance as given in \eqref{initial:relative:positions}. First, we provide certain useful inequalities between the eigenvalues of the weighted Laplacian  $L_{w}(x)$ and the Laplacian matrix of the graph $L(\mathcal{G})$. Notice, that due to \eqref{function:W}, for each $i=1,\ldots,M$ we have $w_{i}(x)=r(|x_{k}-x_{\ell}|)$ for certain $\{k,\ell\}\in\mathcal{E}$ and hence, by virtue of Property (P), it holds
\begin{equation}\label{wis:property}
0<r(0)\le w_{i}(x)\le \max_{\{k,\ell\}\in\mathcal{E}}r(|x_{k}-x_{\ell}|)
\end{equation}

\noindent From \eqref{wis:property}, it follows that $L_{w}(x)$ has precisely the same properties with those provided for $L(\mathcal{G})$ in the Notation subsection. Furthermore, it holds
\begin{equation}\label{lambda2}
\lambda_{2}(x)\ge \lambda_{2}(\mathcal{G})r(0)
\end{equation}

\noindent where $0=\lambda_{1}(x)<\lambda_{2}(x)\le\cdots\le\lambda_{N}(x)$ and $0=\lambda_{1}(\mathcal{G})<\lambda_{2}(\mathcal{G})\le\cdots\le\lambda_{N}(\mathcal{G})$ are the eigenvalues of $L_{w}(x)$ and the Laplacian matrix of the graph $L(\mathcal{G})$, respectively. Indeed, in order to show \eqref{lambda2}, notice that
\begin{align*}
L_{w}(x)&=D(\mathcal{G}){\rm diag}\{w_{1}(x),\ldots,w_{M}(x)\}D(\mathcal{G})^{T} \\
&=D(\mathcal{G}){\rm diag}\{r(0),\ldots,r(0)\}D(\mathcal{G})^{T} \\
&+D(\mathcal{G}){\rm diag}\{w_{1}(x)-r(0),\ldots,w_{M}(x)-r(0)\}D(\mathcal{G})^{T}=r(0)L(\mathcal{G})+B
\end{align*}

\noindent where \eqref{wis:property} implies that $B:=D(\mathcal{G}){\rm diag}\{w_{1}(x)-r(0),\ldots,w_{M}(x)-r(0)\}D(\mathcal{G})^{T}$ is positive semidefinite. Hence, it holds $L_{w}(x)\succeq r(0)L(\mathcal{G})$, with $\succeq$ being the partial order on the set of symmetric $N\times N$ matrices and thus, we deduce from Corollary 7.7.4(c) in \cite[page 495]{HrJc13} that \eqref{lambda2} is fulfilled.

\noindent In the sequel we introduce some additional notation. Let $H$ be the subspace
\begin{equation*}
H:=\{x\in\Rat{Nn}:x_{1}=x_{2}=\cdots=x_{N}\}
\end{equation*}

\noindent For a vector $x\in\Rat{Nn}$ we denote by $\bar{x}$ its projection to the subspace $H$, and $x^{\perp}$ its orthogonal complement with respect to that subspace, namely $x^{\perp}:=x-\bar{x}$. By taking into account that for all $y\in H$ we have $D(\mathcal{G})^Tc_l(y)=0$ and hence, due to \eqref{weighted:Laplacian}, that $c_l(y)\in\ker(L_w(x))$, it follows that for every vector $x\in\Rat{Nn}$ with $x=\bar{x}+x^{\perp}$ it holds
\begin{equation}\label{A}
L_{w}(x)c_{l}(\bar{x})=0\Rightarrow L_{w}(x)c_{l}(x)=L_{w}(x)c_{l}(\bar{x}+x^{\perp})=L_{w}(x)c_{l}(x^{\perp})
\end{equation}

\noindent We also denote by $\Delta x\in\Rat{Mn}$ the stack column vector of the vectors $x_{i}-x_{j},\{i,j\}\in\mathcal{E}$ with the edges ordered as in the case of the incidence matrix. It is thus straightforward to check that for all $x\in\Rat{Nn}$
\begin{equation}\label{Deltax:incidence:mat}
D(\mathcal{G})^Tx=\Delta x
\end{equation}

\noindent and furthermore, due to \eqref{function:W} and \eqref{wis:property}, that
\begin{equation}\label{B1}
|W(x)|\le r(|\Delta x|_{\infty})
\end{equation}

\noindent where
\begin{equation}
|\Delta x|_{\infty}:=\max\{|\Delta x_{i}|,i=1,\ldots,M\}
\end{equation}

\noindent Before proceeding we state the following elementary facts, whose proofs can be found in the Appendix. In particular, for the vectors  $x=(x_1,\ldots,x_N),y=(y_1,\ldots,y_N)\in\Rat{Nn}$ the following properties hold.

\noindent \textbf{Fact I.}
\begin{equation} \label{fact1}
|L_{w}(x)c_{l}(x^{\perp})|\ge \lambda_{2}(x)|c_{l}(x^{\perp})|,\forall l=1,\ldots,n
\end{equation}

\noindent \textbf{Fact II.}
\begin{equation} \label{fact2}
\sum_{l=1}^n|c_l(x)||c_l(y)|\le |x||y|
\end{equation}

\noindent \textbf{Fact III.}\label{fact3}
\begin{equation}
|x^{\perp}|\ge\frac{1}{\sqrt{2(N-1)}}|\Delta x|
\end{equation}

\noindent \textbf{Fact IV.}
\begin{equation}\label{fact4}
\sqrt{2}|x^{\perp}|\ge |\Delta x|_{\infty}
\end{equation}

\noindent We are now in position to bound the derivative of the energy function $V$ and exploit the result in order to prove the desired connectivity maintenance property. We break the subsequent proof in two main steps.

\noindent \textbf{Step 1: Bound estimation for the rhs of \eqref{Lyapunov:function:derivative}.}  \\

\noindent \textbf{Bound for the first term in \eqref{Lyapunov:function:derivative}.} By taking into account \eqref{A}, it follows that
\begin{equation} \label{first:term:bound1}
\sum_{l=1}^{n}c_{l}(x)^{T}L_{w}(x)^{2}c_{l}(x)=\sum_{l=1}^{n}\left|L_{w}(x)c_{l}(x)\right|^{2}=\sum_{l=1}^{n}\left|L_{w}(x)c_{l}(x^{\perp})\right|^{2}
\end{equation}

\noindent and by exploiting Fact I and \eqref{lambda2}, we get
\begin{align}
\sum_{l=1}^{n}\left|L_{w}(x)c_{l}(x^{\perp})\right|^{2}&\ge\sum_{l=1}^{n}\lambda_{2}(x)^2|c_{l}(x^{\perp})|^{2}\ge\sum_{l=1}^{n}[\lambda_{2}(\mathcal{G})r(0)]^2|c_{l}(x^{\perp})|^{2} \nonumber \\
&=[\lambda_{2}(\mathcal{G})r(0)]^{2}\sum_{l=1}^{n}\left|c_{l}(x^{\perp})\right|^{2}=[\lambda_{2}(\mathcal{G})r(0)]^{2}|x^{\perp}|^{2} \label{first:term:bound2}
\end{align}

\noindent Thus, it follows from \eqref{first:term:bound1} and \eqref{first:term:bound2} that
\begin{equation} \label{first:term:bound:final}
\sum_{l=1}^{n}c_{l}(x)^{T}L_{w}(x)^{2}c_{l}(x)\ge [\lambda_{2}(\mathcal{G})r(0)]^{2}|x^{\perp}|^{2}
\end{equation}

\noindent \noindent \textbf{Bound for the second term in \eqref{Lyapunov:function:derivative}.} For this term, we have from \eqref{weighted:Laplacian} and \eqref{Deltax:incidence:mat} that
\begin{align} \label{second:term:bound1}
\left|\sum_{l=1}^{n}c_{l}(x)^{T}L_{w}(x)c_{l}(v)\right|&\le\sum_{l=1}^{n}|c_{l}(x)^{T}L_{w}(x)c_{l}(v)| \nonumber \\
&=\sum_{l=1}^{n}|c_{l}(x)^{T}D(\mathcal{G})W(x)D(\mathcal{G})^{T}c_{l}(v)| \nonumber \\
&=\sum_{l=1}^{n}|c_{l}(\Delta x)^{T}W(x)D(\mathcal{G})^{T}c_{l}(v)| \nonumber \\
&\le\sum_{l=1}^{n}|c_{l}(\Delta x)||W(x)||D(\mathcal{G})^{T}||c_{l}(v)|
\end{align}

\noindent By taking into account \eqref{B1}, we obtain
\begin{equation} \label{second:term:bound2}
\sum_{l=1}^{n}|c_{l}(\Delta x)||W(x)||D(\mathcal{G})^{T}||c_{l}(v)|\le\sum_{l=1}^{n}|c_{l}(\Delta x)|r(|\Delta x|_{\infty})|D(\mathcal{G})^{T}||c_{l}(v)|
\end{equation}

\noindent Also, by exploiting Fact II, we get that
\begin{align} \label{second:term:bound3}
\sum_{l=1}^{n}|c_{l}(\Delta x)|r(|\Delta x|_{\infty}) & |D(\mathcal{G})^{T}||c_{l}(v)| \nonumber \\
= & r(|\Delta x|_{\infty})|D(\mathcal{G})^{T}|\sum_{l=1}^{n}|c_{l}(\Delta x)||c_{l}(v)| \nonumber \\
\le & r(|\Delta x|_{\infty})|D(\mathcal{G})^{T}||\Delta x||c_{l}(v)|  \nonumber \\
\le & r(|\Delta x|_{\infty})|D(\mathcal{G})^{T}||\Delta x|\sqrt{N}|v|_{\infty}
\end{align}

\noindent where
$$
|v|_{\infty}:=\max\{|v_i|,i=1,\ldots,N\}
$$

\noindent Hence, it follows from \eqref{second:term:bound1}-\eqref{second:term:bound3} that
\begin{equation} \label{second:term:bound:final}
\left|\sum_{l=1}^{n}c_{l}(x)^{T}L_{w}(x)c_{l}(v)\right|\le \sqrt{N}|D(\mathcal{G})^{T}||\Delta x|r(|\Delta x|_{\infty})|v|_{\infty}
\end{equation}

\noindent Thus, we get from \eqref{Lyapunov:function:derivative}, \eqref{first:term:bound:final} and \eqref{second:term:bound:final} that
\begin{equation*}
\dot{V}\le -[\lambda_{2}(\mathcal{G})r(0)]^{2}|x^{\perp}|^{2}+\sqrt{N}|D(\mathcal{G})^{T}||\Delta x|r(|\Delta x|_{\infty})|v|_{\infty}
\end{equation*}

\noindent and by exploiting Facts III and IV, that
\begin{align*}
\dot{V}&\le - [\lambda_{2}(\mathcal{G})r(0)]^{2}\frac{1}{\sqrt{2(N-1)}}|\Delta x|\frac{1}{\sqrt{2}}|\Delta x|_{\infty}+\sqrt{N}|D(\mathcal{G})^{T}||\Delta x|r(|\Delta x|_{\infty})|v|_{\infty} \\
&=|\Delta x|\left(-\frac{1}{2\sqrt{N-1}}[\lambda_{2}(\mathcal{G})r(0)]^{2}|\Delta x|_{\infty}+\sqrt{N}|D(\mathcal{G})^{T}|r(|\Delta x|_{\infty})|v|_{\infty}\right)
\end{align*}

\noindent By using the notation $|\Delta x|_{\infty}:=s$, in order to guarantee that the above rhs is negative for $s\ge\tilde{R}$, it should hold
\begin{align*}
&-\frac{\lambda_{2}(\mathcal{G})^2}{2\sqrt{(N-1)}}r(0)^{2}s+\sqrt{N}|D(\mathcal{G})^{T}|r(s)|v|_{\infty}\le 0,\forall s\ge\tilde{R}\iff \\
&\frac{2\sqrt{N(N-1)}|D(\mathcal{G})^{T}|}{\lambda_{2}(\mathcal{G})^{2}}|v|_{\infty}\le r(0)^{2}\frac{s}{r(s)},\forall s\ge\tilde{R}
\end{align*}

\noindent or equivalently
\begin{equation} \label{v:infinity:norm:bound}
|v|_{\infty}\le\frac{1}{K}r(0)^{2}\frac{s}{r(s)},\forall s\ge\tilde{R}
\end{equation}

\noindent with $K$ as given in \eqref{constant:K}. Hence, we have shown that for $v$ satisfying \eqref{v:infinity:norm:bound} the following implication holds
\begin{equation} \label{Vderivative:negative:condition}
|\Delta x|_{\infty}\ge\tilde{R}\Rightarrow \dot{V}\le 0
\end{equation}

\noindent \textbf{Step 2: Proof of connectivity.}  \\

By assuming that conditions \eqref{input:individual:bound}, \eqref{constant:delta:general:constraint} and \eqref{Pat:Rtilde:vs:R} in the statement of the proposition are fulfilled and recalling that according to (ICH) \eqref{initial:relative:positions} holds, we can show that the system will remain connected for all future times. Indeed, let $x(\cdot)$ be the solution of the closed loop system \eqref{multiagent:single:integrator}-\eqref{feedback:CaseI} with initial condition satisfying \eqref{initial:relative:positions}, defined on the maximal right interval $[0,T_{\max})$. We claim that the system remains connected on $[0,T_{\max})$, namely, that $\max\{|x_i(t)-x_j(t)|:\{i,j\}\in\mathcal{E}\}\le R$ for all $t\in [0,T_{\max})$, which by boundedness of the dynamics on the set $\mathcal{F}:=\{x\in\Rat{Nn}:|x_i-x_j|\le R,\forall \{i,j\}\in\mathcal{E}\}$ implies that $T_{\max}=\infty$. In order to prove the last assertion, assume on the contrary that $T_{\max}<\infty$. Then, by taking into account that $x(t)$ remains in $\mathcal{F}$ for all $t\in[0,T_{\max})$ and that the dynamics are bounded on $\mathcal{F}$, it follows that $x(t)$ remains in a compact subset of $\Rat{Nn}$ for all $t\in[0,T_{\max})$ and hence, that it can be extended, contradicting maximality of $[0,T_{\max})$. We proceed with the proof of connectivity. First, notice that
\begin{equation} \label{Vat:zero}
V(x(0))\le\frac{1}{2}P(R)
\end{equation}

\noindent Indeed, by exploiting \eqref{initial:relative:positions} and \eqref{Pat:Rtilde:vs:R} we get that
\begin{align}
V(x(0)) & =\frac{1}{2}\sum_{i=1}^{N}\sum_{j\in\mathcal{N}_{i}}P(|x_{i}(0)-x_{j}(0)|) \nonumber \\
& \le\frac{1}{2}\sum_{i=1}^{N}\sum_{j\in\mathcal{N}_{i}}P(\tilde{R})=\frac{M}{2}P(\tilde{R})\le\frac{1}{2}P(R)  \label{Vat:zero:proof}
\end{align}

\noindent In order to prove our claim, it suffices to show that
\begin{equation} \label{Voft:bound}
V(x(t))\le\frac{1}{2}P(R),\forall t\in [0,T_{\max})
\end{equation}

\noindent  because if $|x_{i}(t)-x_{j}(t)|>R$ for certain $t\in [0,T_{\max})$ and $\{i,j\}\in\mathcal{E}$, then $V(x(t))\ge \frac{1}{2}P(|x_{i}(t)-x_{j}(t)|)>\frac{1}{2}P(R)$. We prove \eqref{Voft:bound} by contradiction. Indeed, suppose on the contrary that there exists $T\in (0,T_{\max})$ (due to \eqref{Vat:zero}) such that
\begin{equation} \label{VatT}
V(x(T))>\frac{1}{2}P(R)
\end{equation}

\noindent and define
\begin{equation} \label{time:tau}
\tau:=\min\{t\in[0,T]:V(x(\bar{t}))>\tfrac{1}{2}P(R),\forall\bar{t}\in(t,T]\}
\end{equation}

\noindent which due to \eqref{VatT} and continuity of $V(x(\cdot))$ is well defined. Then it follows from \eqref{Vat:zero} and \eqref{time:tau} that
\begin{equation} \label{Vat:tau:toT:properties}
V(x(\tau))=\frac{1}{2}P(R),V(x(t))>\frac{1}{2}P(R),\forall t\in(\tau,T]
\end{equation}

\noindent hence, there exists $\bar{\tau}\in(\tau,T)$ such that
\begin{equation} \label{Vderivative:at:taubar}
\dot{V}(x(\bar{\tau}))=\frac{V(x(T))-V(x(\tau))}{T-\tau}>0
\end{equation}

\noindent On the other hand, due to \eqref{Vat:tau:toT:properties}, it holds
\begin{equation} \label{Vattaubar}
V(x(\bar{\tau}))>\frac{1}{2}P(R)
\end{equation}

\noindent which implies that there exists $\{i,j\}\in\mathcal{E}$ with
\begin{equation} \label{xij:difference}
|x_{i}(\bar{\tau})-x_{j}(\bar{\tau})|>\tilde{R}
\end{equation}

\noindent Indeed, if \eqref{xij:difference} does not hold, then we can show as in \eqref{Vat:zero:proof} that $V(x(\bar{\tau}))\le\frac{1}{2}P(R)$ which contradicts \eqref{Vattaubar}. Notice that by virtue of \eqref{input:individual:bound} and \eqref{constant:delta:general:constraint}, \eqref{v:infinity:norm:bound} is fulfilled. Hence, we get from \eqref{xij:difference} that $|\Delta x(\bar{\tau})|_{\infty}>\tilde{R}$ and thus from \eqref{Vderivative:negative:condition} it follows that $\dot{V}(x(\bar{\tau}))\le 0$, which contradicts \eqref{Vderivative:at:taubar}. We conclude that \eqref{Voft:bound} holds and the proof is complete.
\end{proof}

In the following corollary, we apply the result of Proposition \ref{connectivity:maintainance} in order to provide two explicit feedback laws of the form \eqref{feedback:CaseI}, a linear and a nonlinear one and compare their performance in the subsequent remark.

\begin{corollary}
For the multi agent system \eqref{multiagent:single:integrator}, assume that (ICH) is fulfilled and consider the control law \eqref{contol:law:connectedness} as given by \eqref{feedback:CaseI}. By imposing the additional requirement $r(0)=r(\tilde{R})=1$ and defining
\begin{equation} \label{constant:delta}
\delta:=\frac{\tilde{R}}{K}
\end{equation}

\noindent with $\tilde{R}$ and $K$ as given in \eqref{initial:relative:positions} and \eqref{constant:K}, respectively, the system remains connected for all positive times, provided that the function $r(\cdot)$ and the constant $\tilde{R}$ are selected as in the following two cases (L) and (NL) (providing a linear and a nonlinear feedback, respectively).

\noindent \textbf{Case (L).} We select
\begin{equation} \label{function:rho:linear}
r(s):=1,s\ge 0
\end{equation}

\noindent and
\begin{equation} \label{constant:R:linear}
\tilde{R}\le\frac{1}{\sqrt{M}}R
\end{equation}

\noindent where $M$ is the cardinality of the system's graph edge set.

\noindent \textbf{Case (NL).} We select
\begin{equation} \label{function:rho:nonlinear}
r(s):=\left\lbrace\begin{array}{ll} 1, & s\in [0,\tilde{R}] \\ \frac{s}{\tilde{R}}, & s\in (\tilde{R},R] \\ \frac{R}{\tilde{R}}, & s\in (R,\infty) \end{array}\right.
\end{equation}

\noindent and
\begin{equation} \label{constant:R:nonlinear}
\tilde{R}\le\left(\frac{2}{3M-1}\right)^{\frac{1}{3}}R
\end{equation}
\end{corollary}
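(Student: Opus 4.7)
The plan is to verify, for each of the two prescribed choices of $r(\cdot)$, that all hypotheses of Proposition \ref{connectivity:maintainance} are met with the designated $\delta=\tilde R/K$, so that connectivity maintenance follows as a direct corollary. There are three things to check in each case: (i) $r(\cdot)$ satisfies Property (P); (ii) the inequality \eqref{constant:delta:general:constraint} with the chosen $\delta$; and (iii) the integral inequality \eqref{Pat:Rtilde:vs:R} on $P$, which will reduce to the stated bounds \eqref{constant:R:linear} and \eqref{constant:R:nonlinear} on $\tilde R$.

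For Case (L), Property (P) is immediate since $r\equiv 1$ is continuous, (weakly) increasing and positive at $0$. For \eqref{constant:delta:general:constraint}, $r(0)^{2}s/r(s)=s\ge\tilde R$ for all $s\ge\tilde R$, so the inequality holds with equality at $s=\tilde R$ by the very choice $\delta=\tilde R/K$. For \eqref{Pat:Rtilde:vs:R}, I compute $P(\rho)=\int_{0}^{\rho}s\,ds=\rho^{2}/2$; the requirement $MP(\tilde R)\le P(R)$ then reads $M\tilde R^{2}\le R^{2}$, which is exactly \eqref{constant:R:linear}.

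For Case (NL), I first verify continuity of $r$ at the breakpoints $\tilde R$ and $R$ (the one-sided limits both equal $1$ at $\tilde R$ and both equal $R/\tilde R$ at $R$) and monotonicity on each of the three pieces, together with $r(0)=1>0$. The key check is \eqref{constant:delta:general:constraint}: on $[\tilde R,R]$, $s/r(s)=\tilde R$, so the right-hand side of \eqref{constant:delta:general:constraint} equals $\tilde R/K=\delta$; on $(R,\infty)$, $s/r(s)=s\tilde R/R\ge\tilde R$, so the inequality persists. For \eqref{Pat:Rtilde:vs:R}, I split
\[
P(R)=\int_{0}^{\tilde R}s\,ds+\int_{\tilde R}^{R}\tfrac{s^{2}}{\tilde R}\,ds=\tfrac{\tilde R^{2}}{2}+\tfrac{R^{3}-\tilde R^{3}}{3\tilde R},
\]
while $P(\tilde R)=\tilde R^{2}/2$. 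Substituting into $MP(\tilde R)\le P(R)$, multiplying through by $6\tilde R$ and collecting terms gives $(3M-1)\tilde R^{3}\le 2R^{3}$, i.e.\ \eqref{constant:R:nonlinear}.

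There is no real obstacle: the proof is a direct verification of the three hypotheses of Proposition \ref{connectivity:maintainance} for two explicit families. The only mildly delicate point is keeping track of the piecewise definition of $r$ in Case (NL) when evaluating $s/r(s)$ on $(R,\infty)$ and when splitting the integral for $P(R)$; once this is done carefully the algebra reduces each constraint to the stated bound on $\tilde R$.
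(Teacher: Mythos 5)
Your proposal is correct and follows essentially the same route as the paper: apply Proposition \ref{connectivity:maintainance} and verify, for each of the choices \eqref{function:rho:linear} and \eqref{function:rho:nonlinear} with $\delta=\tilde R/K$, that Property (P), the bound \eqref{constant:delta:general:constraint} (using $s/r(s)=\tilde R$ on $[\tilde R,R]$ and $s/r(s)=s\tilde R/R\ge\tilde R$ beyond $R$), and the integral condition \eqref{Pat:Rtilde:vs:R} hold, the latter reducing exactly to \eqref{constant:R:linear} and \eqref{constant:R:nonlinear}. The algebra in both cases matches the paper's computation, so no gap remains.
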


\begin{proof}
For the proof we apply the result of Proposition \eqref{connectivity:maintainance}. In particular, it suffices to show that for both cases (L) and (NL) the selection of the function $r(\cdot)$ and the initial maximum distance $\tilde{R}$ satisfy \eqref{constant:delta:general:constraint} and \eqref{Pat:Rtilde:vs:R}, with $\delta$ as given by \eqref{constant:delta}.

\noindent \textbf{Case (L).} Indeed, it follows from  \eqref{constant:delta} and \eqref{function:rho:linear} that
\begin{align*}
\delta=\frac{\tilde{R}}{K}=\frac{1}{K}\frac{\tilde{R}}{r(s)}r(0)^{2}\le\frac{1}{K}\frac{s}{r(s)}r(0)^{2},\forall s\ge\tilde{R}
\end{align*}

\noindent hence, \eqref{constant:delta:general:constraint} is fulfilled. Furthermore, it follows from \eqref{function:rho:linear} and \eqref{function:P} that
\begin{align*}
MP(\tilde{R})\le P(R) & \iff M\int_{0}^{\tilde{R}}r(s)sds\le\int_{0}^{R}r(s)sds \\
& \iff  M\int_{0}^{\tilde{R}}sds\le\int_{\tilde{R}}^{R}sds \iff M\frac{\tilde{R}^{2}}{2}\le\frac{R^{2}}{2}\iff \tilde{R}\le\frac{1}{\sqrt{M}}R
\end{align*}

\noindent \textbf{Case (NL).} Also in this case, it follows from \eqref{constant:delta} and \eqref{function:rho:nonlinear} that
\begin{align*}
\delta=\frac{\tilde{R}}{K}=\frac{\tilde{R}}{K}\frac{s}{s}r(0)^{2}\le\frac{1}{K}\frac{s}{\frac{s}{\tilde{R}}}r(0)^{2}=\frac{1}{K}\frac{s}{r(s)}r(0)^{2},\forall s\in (\tilde{R},R]
\end{align*}

\noindent and that

\begin{align*}
\delta&=\frac{\tilde{R}}{K}\le\frac{\tilde{R}}{K}\frac{s}{R}r(0)^{2} \\
&=\frac{1}{K}\frac{s}{\frac{s}{\tilde{R}}}r(0)^{2}\le\frac{1}{K}\frac{s}{\frac{R}{\tilde{R}}}r(0)^{2}=\frac{1}{K}\frac{s}{r(s)}r(0)^{2},\forall s\ge R
\end{align*}

hence, \eqref{constant:delta:general:constraint} is again fulfilled. In addition, it follows from \eqref{function:rho:nonlinear} and \eqref{function:P} that
\begin{align*}
MP(\tilde{R})\le P(R) & \iff M\int_{0}^{\tilde{R}}r(s)sds\le\int_{0}^{R}r(s)sds \\
& \iff  M\int_{0}^{\tilde{R}}r(s)sds\le\int_{0}^{\tilde{R}}r(s)sds+\int_{\tilde{R}}^{R}r(s)sds \\
& \iff (M-1)\int_{0}^{\tilde{R}}sds\le\frac{1}{\tilde{R}}\int_{\tilde{R}}^{R}s^{2}ds \iff (M-1)\frac{\tilde{R}^{2}}{2}\le \frac{1}{\tilde{R}}\left.\frac{1}{3}s^{3}\right|_{\tilde{R}}^{R} \\
& \iff \frac{M-1}{2}\tilde{R}^{3}\le\frac{1}{3}R^{3}-\frac{1}{3}\tilde{R}^{3} \iff \frac{3M-3}{6}\tilde{R}^{3}+\frac{2}{6}\tilde{R}^{3}\le \frac{1}{3}R^{3} \\
& \iff \frac{3M-1}{2}\tilde{R}^{3}\le R^{3}\iff\tilde{R}\le\left(\frac{2}{3M-1}\right)^{\frac{1}{3}}R
\end{align*}
\end{proof}

\begin{rem}
At this point we derive the advantage of using the nonlinear controller over the linear one by comparing the ratio of the maximal initial relative distance that maintains connectivity for these two cases. In both cases we have the same bound on the free input terms and the same feedback law up to some distance between neighboring agents, which allows us to compare their performance under the criterion of maximizing the largest initial distance between two interconnected agents. In particular, this ratio, which depends on the number of edges in the systems' graph, is given by
\begin{equation}
Rat(M):=\frac{\frac{1}{\sqrt{M}}}{\left(\frac{2}{3M-1}\right)^{\frac{1}{3}}}
\end{equation}

\noindent It is evident from the plot of $Rat(M)$ in Figure 1 that it is a decreasing function of $M$ with values less than 1 for $M\ge 1$. The latter property follows quite intuitively if we take a look at Figure 2. Indeed, as sown in the proof of Proposition \eqref{connectivity:maintainance}, both the maximal initial distance $\tilde{R}_L$ for the linear and $\tilde{R}_{NL}$ for the nonlinear case are expressed by virtue of \eqref{Pat:Rtilde:vs:R} as the solutions of the equation $\frac{P(\tilde{R})}{P(R)-P(\tilde{R})}=\frac{1}{M-1}$. In Figure 2, both the quotient of the area inside the orange frame over the area in the red frame and the quotient of the violet area over the blue area are the same and equal to $\frac{P(\tilde{R})}{P(R)-P(\tilde{R})}=\frac{1}{M-1}$, with $P(\cdot)$ as defined for both the cases (L) and (NL) through \eqref{function:P}, \eqref{function:rho:linear} and \eqref{function:rho:nonlinear}. It is thus straightforward that the quotient of $\tilde{R}_{L}$ over $\tilde{R}_{NL}$, namely $Rat(M)$ should be less than 1. Yet, we also provide a formal proof of this argument by studying the monotonicity of $Rat(\cdot)$.

\noindent For convenience we consider the 6-th power of $Rat(M)$, namely the function
\begin{equation*}
Rat^{6}(M):=\left[\frac{\frac{1}{\sqrt{M}}}{\left(\frac{2}{3M-1}\right)^{\frac{1}{3}}}\right]^{6}=\frac{(3M-1)^{2}}{{2^{3}M^{3}}}
\end{equation*}

\noindent Hence, by evaluating the derivative of $Rat^{6}(M)$ we obtain
\begin{align*}
\frac{d}{dM}Rat^{6}(M)& =\frac{d}{dM}\frac{(3M-1)^{2}}{{2^{3}M^{3}}}=\frac{2(3M-1)3\cdot2^{3}M^{3}-(3M-1)^{2}2^{3}3M^{2}}{2^{6}M{6}} \\
& = \frac{3\cdot2^{3}M^{2}(3M-1)[2M-(3M-1)]}{2^{6}M^{6}}=\frac{3}{2^{3}}\cdot\frac{3M-1}{M^{4}}(1-M)<0,\;{\rm for}\; M>1
\end{align*}
\end{rem}

\begin{figure}[h]
\begin{center}
\includegraphics[scale=0.45]{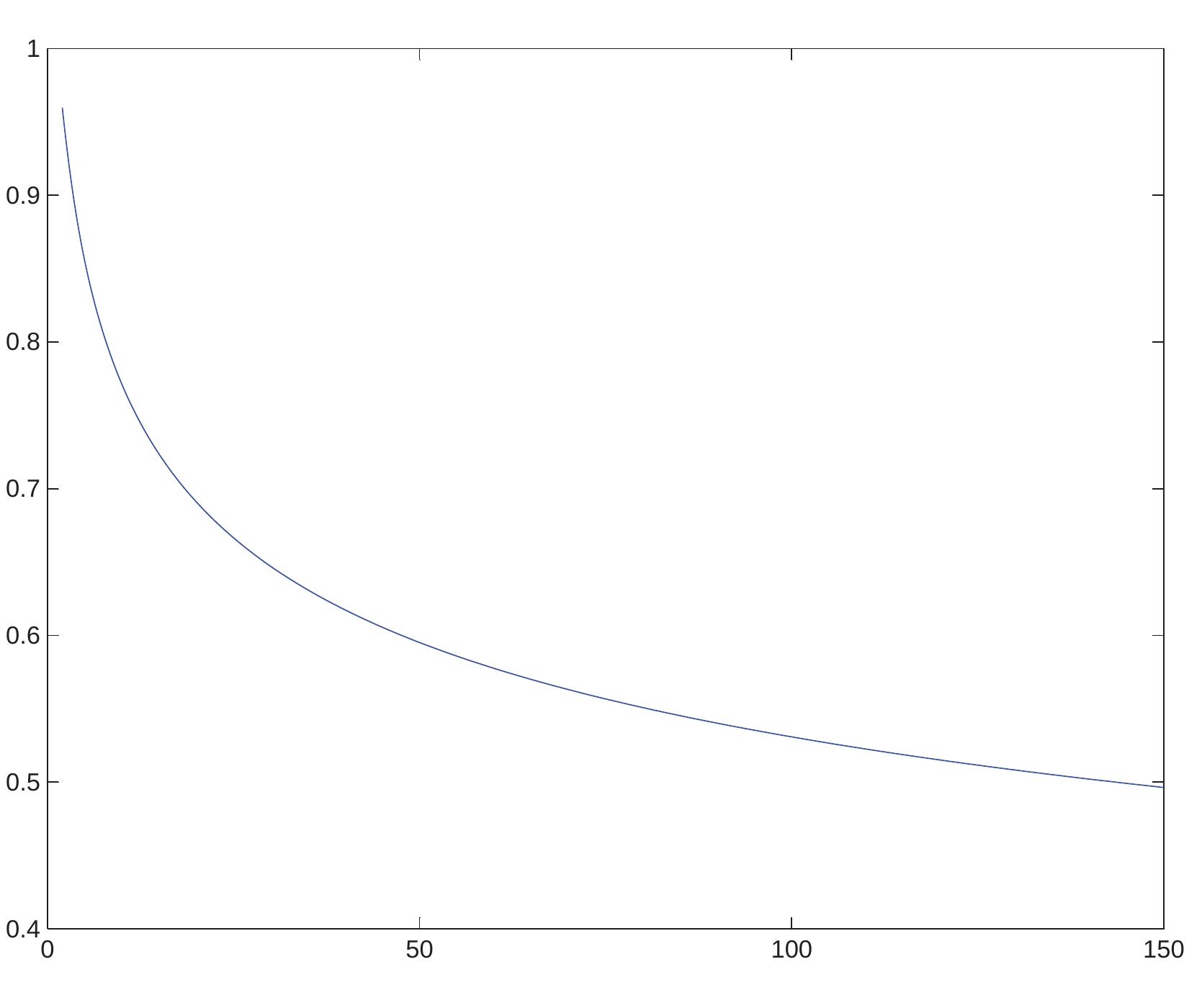}
\caption{\small \sl This figure shows the ratio $\frac{1}{\sqrt{M}}\;/\; \left(\frac{2}{3M-1}\right)^{\frac{1}{3}}$ for the number of edges ranging from 2 to 150 \label{figure1}}
\end{center}
\end{figure}

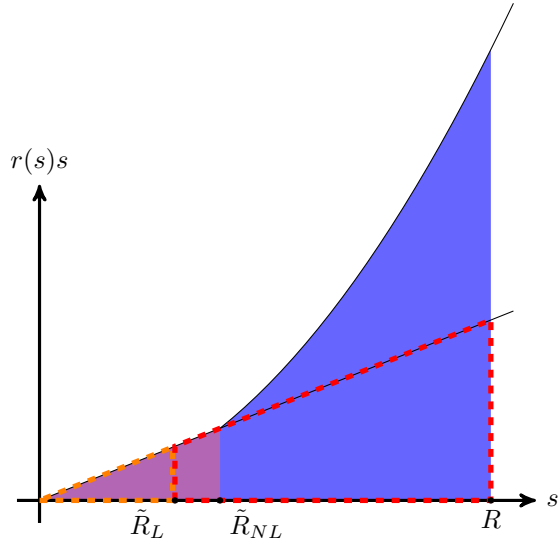
\begin{figure}[H]
\begin{center}
\begin{tikzpicture}[
axis/.style={very thick, ->, >=stealth'},scale=0.6]

  \fill[blue!60] (3,0) -- plot[domain=4:10] (\x,0.4/4*\x^2) --  (10,0);
  \fill[violet!60] (0,0) -- plot[domain=0:4](\x,0.4*\x) --  (4,0);
\draw[axis] (-0.5,0) -- (11,0) node [right] {$s$};
\draw[axis] (0,-0.5) -- (0,7) node [above] {$r(s)s$};
\draw[domain= 0:10.5, color=black] plot (\x,0.4*\x);
\draw[domain= 4:10.5, color=black] plot (\x,0.4/4*\x^2);
\draw[dashed, line width = 2pt, color=orange] (0,0) -- (2.95,0) -- (2.95,1.18) -- (0,0);
\draw[dashed, line width = 2pt, color=red] (3,0) -- (3,1.2) -- (10,4) -- (10,0) -- (3,0);
\fill[black] (3,0) circle (2pt) node[below left] {$\tilde{R}_{L}$};
\fill[black] (4,0) circle (2pt) node[below right] {$\tilde{R}_{NL}$};
\fill[black] (10,0) circle (2pt) node[below] {$R$};
\end{tikzpicture}
\caption{\small \sl $\tilde{R}_{L}$ and $\tilde{R}_{NL}$ and norms of the respective feedback control laws for $M=12$}
\end{center}
\end{figure}

\section{Invariance Analysis}\label{Invariance:analysis}

In what follows, we assume that the agents' initial states belong to a given domain $D\subset\Rat{n}$. In order to simplify the subsequent analysis, we assume that $D={\rm int}(B(\mathfrak{R}))$, namely the interior of the ball with center $0\in\Rat{n}$ and radius $\mathfrak{R}>0$. We aim at designing an appropriate modification of the feedback law \eqref{feedback:CaseI} which guarantees that the trajectories of the agents remain in $D$ for all future times.

\noindent For each $\varepsilon\in(0,\mathfrak{R})$, let $N_{\varepsilon}$ be the region with distance $\varepsilon$ from the boundary of $\partial D$ towards the interior of $D$, namely
\begin{equation}\label{subdomain:N}
N_{\varepsilon}:=\{x\in\Rat{n}:R-\varepsilon\le|x|<R\}
\end{equation}

\noindent and
\begin{equation}\label{subdomain:D}
D_{\varepsilon}:=D\setminus N_{\varepsilon}
\end{equation}

\noindent We proceed by defining a repulsive from the boundary of $D$ vector field, which when added to the dynamics of each agent in \eqref{feedback:CaseI}, will ensure the desired invariance of the closed loop system and simultaneously guarantee the same robust connectivity result established above. Let $h:[0,1]\to [0,1]$ be a Lipschitz continuous function that satisfies
\begin{equation} \label{function:h:properties}
h(0)=0;\;h(1)=1;\;h(\cdot)\;\textup{strictly increasing}
\end{equation}

\noindent We define the vector field $g:D\to\Rat{n}$ as
\begin{equation} \label{function:g}
g(x):=\left\lbrace\begin{array}{ll}-c\delta h\left(\frac{\varepsilon+|x|-R}{\varepsilon}\right)\frac{x}{|x|}, & {\rm if}\;x\in N_{\varepsilon} \\ 0, & {\rm if}\;x\in D_{\varepsilon} \end{array} \right.
\end{equation}

\noindent with $h(\cdot)$ as given above and appropriate positive constants $c$, $\delta$ which serve as design parameters. Then, it follows from \eqref{function:h:properties}, \eqref{function:g} and the Lipschitz property for $h(\cdot)$ that the vector field $g(\cdot)$ is Lipschitz continuous on $D$.

\noindent Having defined the mappings for the extra term in the dynamics of the modified controller which will guarantee the desired invariance property, we now state our main result.

\begin{prop}\label{Invariance:result}
For the multi-agent system \eqref{multiagent:single:integrator}, assume that  $D={\rm int}(B(\mathfrak{R}))$, for certain $\mathfrak{R}>0$ and that (ICH) is fulfilled. Furthermore, let $\varepsilon\in (0,\mathfrak{R})$, $N_{\varepsilon}$ and $D_{\varepsilon}$ as defined by \eqref{subdomain:N} and \eqref{subdomain:D}, respectively and assume that the initial states of all agents lie in $D_{\varepsilon}$. Then, there exists a control law \eqref{contol:law:connectedness} (with free inputs $v_{i}$) which guarantees both connectivity and invariance of $D$ for the solution of the system for all future times and is defined as
\begin{equation} \label{feedback:invariance}
u_{i}=g(x_{i})-\sum_{j\in\mathcal{N}_{i}}r(|x_{i}-x_{j}|)(x_{i}-x_{j})+v_{i}
\end{equation}

\noindent with $g(\cdot)$ given in \eqref{function:g} and certain $r(\cdot)$ satisfying Property (P). We choose the same positive constant $\delta$ in both \eqref{input:individual:bound} and \eqref{function:g} and select the constant $c$ in \eqref{function:g} greater that 1. Then the connectivity-invariance result is valid provided that the parameters $\delta$, $\tilde{R}$ and the function $r(\cdot)$ satisfy the restrictions \eqref{constant:delta:general:constraint}, \eqref{Pat:Rtilde:vs:R} and the input terms $v_i(\cdot)$, $i=1,\ldots,N$ satisfy \eqref{input:individual:bound}.
\end{prop}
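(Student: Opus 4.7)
The plan is to combine the Lyapunov analysis of Proposition~\ref{connectivity:maintainance} with a geometric boundary argument, handling connectivity and invariance in two separate steps that can then be stitched together by the usual maximal-interval extension. Throughout I treat the repulsive field $g(\cdot)$ as a known, globally bounded perturbation ($|g(x)|\le c\delta$ for $x\in D$, by \eqref{function:g} and $h\le 1$) added to the same feedback structure analyzed before.

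\textbf{Step 1 (Connectivity).} I would start by recomputing $\dot V$ along the trajectories of \eqref{multiagent:single:integrator}-\eqref{feedback:invariance}. Writing the closed loop componentwise, $c_l(\dot x)=-L_w(x)c_l(x)+c_l(\tilde g)+c_l(v)$, where $\tilde g(x):=(g(x_1),\dots,g(x_N))$, gives exactly the expression in \eqref{Lyapunov:function:derivative} plus one extra term of the same structure, $\sum_l c_l(x)^TL_w(x)c_l(\tilde g)$. Applying the same chain of inequalities used for the $v$-term (via \eqref{weighted:Laplacian}, \eqref{Deltax:incidence:mat}, \eqref{B1} and Fact~II) yields the bound $\sqrt{N}|D(\mathcal{G})^T||\Delta x|r(|\Delta x|_\infty)|\tilde g|_\infty$ with $|\tilde g|_\infty\le c\delta$. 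Absorbing $v$ and $\tilde g$ into a single effective perturbation bounded by $(c+1)\delta$ (or, equivalently, reducing $\delta$ by the factor $c+1$ inside \eqref{constant:delta:general:constraint}) and reusing Facts~III and IV as in the original proof produces the implication $|\Delta x|_\infty\ge\tilde R\Rightarrow\dot V\le 0$. Step~2 of the proof of Proposition~\ref{connectivity:maintainance}, applied verbatim together with \eqref{Pat:Rtilde:vs:R}, then keeps $V(x(t))\le\tfrac{1}{2}P(R)$ and hence $\max_{\{i,j\}\in\mathcal{E}}|x_i(t)-x_j(t)|\le R$ for all $t\in[0,T_{\max})$.

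\textbf{Step 2 (Invariance).} Here I would argue by contradiction. Let $T:=\inf\{t\in[0,T_{\max}):|x_i(t)|\ge\mathfrak{R}\text{ for some }i\}$; since the initial data lie in $D_\varepsilon$, continuity of $x(\cdot)$ forces $T>0$, and by continuity there exists an index $i$ with $|x_i(T)|=\mathfrak{R}$ while $|x_j(T)|\le\mathfrak{R}$ for every $j$. At $t=T$, $h\bigl((\varepsilon+|x_i|-\mathfrak{R})/\varepsilon\bigr)=h(1)=1$, so $g(x_i)=-c\delta\, x_i/|x_i|$, and differentiating $|x_i|$ along \eqref{feedback:invariance} gives
\begin{equation*}
\dot{|x_i|}(T)=-c\delta+\frac{x_i^T v_i}{|x_i|}-\frac{1}{|x_i|}\sum_{j\in\mathcal{N}_i}r(|x_i-x_j|)\,x_i^T(x_i-x_j).
\end{equation*}
The middle term is at most $\delta$ by Cauchy--Schwarz and \eqref{input:individual:bound}, while each summand of the last term satisfies $x_i^T(x_i-x_j)\ge|x_i|(|x_i|-|x_j|)\ge 0$ because $|x_j(T)|\le|x_i(T)|$, so that last term is nonpositive. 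Therefore $\dot{|x_i|}(T)\le(1-c)\delta<0$, which contradicts $|x_i|$ increasing up to $\mathfrak{R}$ at $T$; hence no exit time exists and $x_i(t)\in D$ for all $t\in[0,T_{\max})$. Finally, boundedness of $g$ on $D$ together with the connectivity bound from Step~1 makes the closed-loop right-hand side bounded on the forward-invariant set $\mathcal{F}\cap D^N$, so the standard extension argument from Proposition~\ref{connectivity:maintainance} gives $T_{\max}=\infty$.

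The main obstacle I expect is a bookkeeping one in Step~1: the new term from $g$ has the same shape as the $v$-term, so it cleanly inflates the effective perturbation bound from $\delta$ to $(c+1)\delta$, and one must verify that the condition \eqref{constant:delta:general:constraint} used to close the Lyapunov inequality is still applicable (possibly by a trivial tightening of $\delta$). The invariance step is geometrically clear once one observes that at the first contact time $T$ every other agent is weakly inside the ball, making the coupling contribution favourable; the condition $c>1$ is exactly what is needed to dominate the adversarial input $v_i$ on the outward radial direction.
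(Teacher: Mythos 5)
Your Step 1 does not prove the proposition as stated. You absorb $g$ into the disturbance and close the Lyapunov inequality with an effective bound $(c+1)\delta$, i.e.\ you would need $(c+1)\delta\le\frac{1}{K}r(0)^{2}\frac{s}{r(s)}$ for $s\ge\tilde R$, whereas the proposition asserts connectivity under the original restriction \eqref{constant:delta:general:constraint} on $\delta$ alone, with $c>1$ arbitrary (and in fact large $c$ is what makes the invariance mechanism strong). Your "trivial tightening of $\delta$" changes the hypotheses and yields a strictly weaker statement. The missing idea is that the extra term in $\dot V$ has a sign: by \eqref{V:partial:xi} it equals $\sum_{i}\sum_{j\in\mathcal{N}_i}r(|x_i-x_j|)\langle x_i-x_j,g(x_i)\rangle$, and the paper shows this is nonpositive on $D$ by splitting each agent's neighbors into those in $D_\varepsilon$ and those in $N_\varepsilon$: for an edge with one endpoint in $D_\varepsilon$ the sign follows from the elementary radial inequality (Fact VI), and for an edge with both endpoints in the boundary layer it follows from a vector inequality (Fact V) exploiting that $g$ is radial, repulsive, and of magnitude monotone in the penetration depth (monotonicity of $h$). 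With this, the $v$-term is the only perturbation and \eqref{constant:delta:general:constraint} applies unchanged, so \eqref{Vderivative:negative:condition} and then Step 2 of Proposition \ref{connectivity:maintainance} go through verbatim.

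Your Step 2 is also not sound as written. The closed-loop field \eqref{feedback:invariance} is defined only for $x_i\in D$ (and $N_\varepsilon$ excludes $|x|=\mathfrak{R}$), and $[0,T_{\max})$ is by definition the maximal interval on which the solution exists and stays in $D$; hence the set $\{t\in[0,T_{\max}):|x_i(t)|\ge\mathfrak{R}\}$ is empty and your "first contact time" $T$ does not exist -- the scenario to exclude is $T_{\max}<\infty$ with $x(t)$ approaching $\partial D$ as $t\to T_{\max}^-$. Boundedness of the right-hand side on $D$ does not rule this out: a bounded velocity can reach the boundary of an open set in finite time, so the extension argument you invoke fails for the open ball (it worked in Proposition \ref{connectivity:maintainance} because $\mathcal{F}$ is closed). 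The repair is quantitative, and is exactly what the paper does: with $\tilde c$ defined by $h(1/\tilde c)=1/c$, show that the maximal penetration depth $m(t)$ never exceeds $\varepsilon/\tilde c$, because at any time where the outermost agent $i$ has depth exceeding $\varepsilon/\tilde c$ one has $|g(x_i)|>c\delta h(1/\tilde c)=\delta\ge|v_i|$ while the coupling term is outward-nonpositive (your own computation, which is Fact VI, applied to the outermost agent), so $\frac{d}{dt}|x_i|^2\le 0$ there -- contradicting the one-sided increase extracted via a sequence $t_{\nu_k}\searrow\tau$ and the pigeonhole principle. This keeps $x(t)$ in the compact set ${\rm cl}(D_{\frac{\tilde c-1}{\tilde c}\varepsilon})\subset D$ on $[0,T_{\max})$, and only then does compactness give $T_{\max}=\infty$ and invariance of $D$. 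Your radial estimate is the right local ingredient, but it must be run at this interior barrier, not at $|x_i|=\mathfrak{R}$.
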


\begin{proof}
We break the proof in two steps. In the first step, we show that as long as the invariance assumption is satisfied, namely, the solution of the closed loop system \eqref{multiagent:single:integrator}-\eqref{feedback:invariance} is defined and remains in $D$, network connectivity is maintained. In the second step, we show that for all times where the solution is defined, it remains inside a compact subset of $D$, which implies that the solution is defined and remains in $D$ for all future times, thus providing the desired invariance property.

\noindent \textbf{Step 1: Proof of network connectivity.}  \\

The proof of this step is based on an appropriate modification of the corresponding proof of Proposition \ref{connectivity:maintainance}. In particular, we exploit the energy function $V$ as given by \eqref{V:dfn} and show that when $|\Delta x|_{\infty}\ge\tilde{R}$, namely, when the maximum distance between two agents exceeds $\tilde{R}$ then its derivative along the solutions of the closed loop system is negative. Thus by using the same arguments with those in proof of Proposition  \ref{connectivity:maintainance} we can deduce that the system remains connected. Indeed, by evaluating the derivative of $V$ along the solutions of \eqref{multiagent:single:integrator}-\eqref{feedback:invariance} we obtain
\begin{align}
\dot{V}&=\sum_{i=1}^{N}\frac{\partial}{\partial x_{i}}V(x)\dot{x}_{i} \nonumber \\
&=\sum_{i=1}^{N}\frac{\partial}{\partial x_{i}}V(x)g(x_{i})-\sum_{l=1}^{n}c_{l}(x)^{T}L_{w}(x)^{2}c_{l}(x)+\sum_{l=1}^{n}c_{l}(x)^{T}L_{w}(x)c_{l}(v) \nonumber \\
&\le\sum_{i=1}^{N}\frac{\partial}{\partial x_{i}}V(x)g(x_{i})-\sum_{l=1}^{n}c_{l}(x)^{T}L_{w}(x)^{2}c_{l}(x)+\left|\sum_{l=1}^{n}c_{l}(x)^{T}L_{w}(x)c_{l}(v)\right| \label{Lyapunov:function:derivative:invariance}
\end{align}

\noindent By taking into account \eqref{Lyapunov:function:derivative} and using precisely the same arguments with those in proof of Steps 1 and 2 of Proposition \ref{connectivity:maintainance} it suffices to show that the first term of inequality \eqref{Lyapunov:function:derivative:invariance}, which by virtue of \eqref{V:partial:xi} is equal to
\begin{equation*}
\sum_{i=1}^{N}\frac{\partial}{\partial x_{i}}V(x)g(x_{i})=\sum_{i=1}^{N}\sum_{j\in\mathcal{N}_{i}}r(|x_{i}-x_{j}|)\langle (x_{i}-x_{j}),g(x_{i}) \rangle
\end{equation*}

\noindent is nonpositive for all $x\in D$. Given the partition $D_{\varepsilon}$, $N_{\varepsilon}$ of $D$, we consider for each agent $i\in\mathcal{N}$ the partition $\mathcal{N}_{i}^{D_{\varepsilon}}$, $\mathcal{N}_{i}^{N_{\varepsilon}}$ of its neighbors' set, corresponding to its neighbors that belong to $D_{\varepsilon}$ and $N_{\varepsilon}$, respectively. Also, we denote by $\mathcal{E}^{N_{\varepsilon}}$ the set of edges $\{i,j\}$ with both $x_{i},x_{j}\in N_{\varepsilon}$. Then, by taking into account that due to \eqref{function:g}, $g(x_{i})=0$ for $x_{i}\in D_{\varepsilon}$, it follows that
\begin{align}
&\sum_{i=1}^{N}\sum_{j\in\mathcal{N}_{i}}r(|x_{i}-x_{j}|)\langle (x_{i}-x_{j}),g(x_{i}) \rangle \nonumber \\
=&\sum_{\{i\in\mathcal{N}:x_{i}\in N_{\varepsilon}\}}\sum_{j\in\mathcal{N}_{i}}r(|x_{i}-x_{j}|)\langle (x_{i}-x_{j}),g(x_{i}) \rangle \nonumber \\
=&\sum_{\{i\in\mathcal{N}:x_{i}\in N_{\varepsilon}\}}\sum_{j\in\mathcal{N}_{i}^{D_{\varepsilon}}\cup\mathcal{N}_{i}^{N_{\varepsilon}}}r(|x_{i}-x_{j}|)\langle (x_{i}-x_{j}),g(x_{i}) \rangle \nonumber \\
=&\sum_{\{i\in\mathcal{N}:x_{i}\in N_{\varepsilon}\}}\sum_{j\in\mathcal{N}_{i}^{D_{\varepsilon}}}r(|x_{i}-x_{j}|)\langle (x_{i}-x_{j}),g(x_{i}) \rangle \nonumber \\
+&\sum_{\{i,j\}\in\mathcal{E}^{N_{\varepsilon}}}r(|x_{i}-x_{j}|)[\langle (x_{i}-x_{j}),g(x_{i}) \rangle+\langle (x_{j}-x_{i}),g(x_{j}) \rangle] \label{Lyapunov:function:derivative:extra:terms}
\end{align}

\noindent In order to prove that both terms in \eqref{Lyapunov:function:derivative:extra:terms} are less than or equal to zero and hence derive our desired result on the sign of $\dot{V}$, we exploit the following facts. \\

\noindent \textbf{Fact V.} Consider the vectors $\alpha,\beta,\gamma\in\Rat{n}$ with the following properties:
\begin{equation} \label{alpha:beta:normalized}
|\alpha|=1,\;|\beta|=1
\end{equation}
\begin{equation} \label{alpha:beta:amgle}
\langle \alpha,\gamma \rangle\ge 0,\;\langle \beta,\gamma \rangle\le 0
\end{equation}

\noindent Then for every quadruple $\lambda_{\alpha},\lambda_{\beta},\mu_{\alpha},\mu_{\beta}\in\RgeO$ satisfying
\begin{equation} \label{lambda:mu:inequalities}
\lambda_{\alpha}\ge\lambda_{\beta},\mu_{\alpha}\ge\mu_{\beta}
\end{equation}

\noindent it holds
\begin{equation} \label{alpha:beta:delta:angle}
\langle (\mu_{\alpha}\alpha-\mu_{\beta}\beta),\tilde{\delta} \rangle\ge 0
\end{equation}

\noindent where
\begin{equation} \label{vector:delta:dfn}
\tilde{\delta}:=\lambda_{\alpha}\alpha+\gamma-\lambda_{\beta}\beta
\end{equation}

\noindent We provide the proof of Fact V in the Appendix. \\

\noindent \textbf{Fact VI.} For any $x,\tilde{x}\in N_{\varepsilon}$ with $x=\lambda{\tilde{x}}$, $\lambda>0$ and $y\in {\rm cl}(D_{\varepsilon})$ it holds
\begin{equation*}
\langle (\tilde{x}-y),x \rangle\ge 0
\end{equation*}

\noindent The proof of Fact VI is based on the elementary properties $y\in {\rm cl}(D_{\varepsilon})\Rightarrow |y|\le \mathfrak{R}-\varepsilon$ and $x,\tilde{x}\in N_{\varepsilon}\Rightarrow \mathfrak{R}-\varepsilon\le |x|$ and $\mathfrak{R}-\varepsilon\le |\tilde{x}|$. Hence we have that
$$
\langle (\tilde{x}-y),x\rangle\ge |x||\tilde{x}|-|x||y|\ge 0
$$

\noindent We are now in position to show that both terms in the right hand side of \eqref{Lyapunov:function:derivative:extra:terms} are nonpositive, which according to our previous discussion establishes the desired connectivity maintenance result.

\noindent \textbf{Proof of the fact that the first term in \eqref{Lyapunov:function:derivative:extra:terms} is nonpositive.} For each $i,j$ in the first term in \eqref{Lyapunov:function:derivative:extra:terms} we get by applying Fact VI with $x,\tilde{x}=x_{i}\in N_{\varepsilon}$ and $y=x_{j}\in D_{\varepsilon}$ that
\begin{align*}
& r(|x_{i}-x_{j}|)\langle (x_{i}-x_{j}),g(x_{i}) \rangle  \\
= &  r(|x_{i}-x_{j}|)\frac{-c\delta h\left(\frac{\varepsilon+|x_{i}|-\mathfrak{R}}{\varepsilon}\right)}{|x_{i}|}\langle (x_{i}-x_{j}),x_{i} \rangle\le 0
\end{align*}

\noindent and hence that the first term is nonpositive.

\noindent \textbf{Proof of the fact that the second term in \eqref{Lyapunov:function:derivative:extra:terms} is nonpositive.}  We exploit Fact V in order to prove that for each $\{i,j\}\in\mathcal{E}^{N_{\varepsilon}}$ the quantity
\begin{equation}
\langle (x_{i}-x_{j}),g(x_{i}) \rangle+\langle (x_{j}-x_{i}),g(x_{j}) \rangle
\end{equation}

\noindent in the second term of \eqref{Lyapunov:function:derivative:extra:terms} is nonpositive as well. Notice that both $x_i,x_j\in N_{\varepsilon}$ and without loss of generality we may assume that
\begin{equation} \label{distance:xixj:boundary}
|x_{i}|\ge |x_{j}|
\end{equation}

\noindent namely, that $x_{i}$ is farther from the boundary of $D_{\varepsilon}$ than $x_{j}$. Then by setting
\begin{align}
\alpha:= & \frac{x_{i}}{|x_{i}|} \label{alpha:dfn} \\
\beta:= & \frac{x_{j}}{|x_{j}|} \label{beta:dfn} \\
\gamma:= & \tilde{x}_{i}-\tilde{x}_{j} \label{gamma:dfn}
\end{align}

\noindent with
\begin{align}
\tilde{x}_{i}:= & x_{i}-(|x_i|+\varepsilon-\mathfrak{R})\frac{x_{i}}{|x_{i}|} \label{tilde:xi:dfn} \\
\tilde{x}_{j}:= & x_{j}-(|x_j|+\varepsilon-\mathfrak{R})\frac{x_{j}}{|x_{j}|} \label{tilde:xj:dfn}
\end{align}

\noindent and
\begin{align}
\lambda_{\alpha}:= & |x_i|+\varepsilon-\mathfrak{R} \label{lambda:alpha:dfn} \\
\lambda_{\beta}:= & |x_j|+\varepsilon-\mathfrak{R} \label{lambda:beta:dfn} \\
\mu_{\alpha}:= & c\delta h\left(|x_i|+\varepsilon-\mathfrak{R}\right) \label{mu:alpha:dfn} \\
\mu_{\beta}:= & c\delta h\left(|x_j|+\varepsilon-\mathfrak{R}\right) \label{mu:beta:dfn}
\end{align}

\noindent it follows from \eqref{alpha:dfn} and \eqref{beta:dfn} that
\begin{equation*}
|\alpha|=|\beta|=1
\end{equation*}

\noindent and from \eqref{function:h:properties}, \eqref{distance:xixj:boundary}, \eqref{lambda:alpha:dfn}, \eqref{lambda:beta:dfn}, \eqref{mu:alpha:dfn} and \eqref{mu:beta:dfn} that
\begin{equation*}
\lambda_{\alpha}\ge\lambda_{\beta}\ge 0,\mu_{\alpha}\ge\mu_{\beta}\ge 0
\end{equation*}

\noindent Furthermore, we get from \eqref{tilde:xi:dfn} and \eqref{tilde:xj:dfn} that $|\tilde{x}_i|=|\tilde{x}_j|=\mathfrak{R}-\varepsilon\Rightarrow\tilde{x}_{i},\tilde{x}_{j}\in\partial D_{\varepsilon}$. Thus, it follows from \eqref{alpha:dfn}, \eqref{beta:dfn}, \eqref{gamma:dfn} and application of Fact VI with $x=x_i$, $\tilde{x}=\tilde{x}_i$ and $y=\tilde{x}_j$ that
\begin{equation*}
\langle \alpha,\gamma \rangle\ge 0
\end{equation*}

\noindent and similarly that
\begin{equation*}
\langle \beta,\gamma \rangle\le 0
\end{equation*}

\noindent It follows that all requirements of Fact V are fulfilled. Furthermore, by taking into account \eqref{alpha:dfn}-\eqref{lambda:beta:dfn}, we get that
\begin{equation}\label{tilde:delta}
\tilde{\delta}=\lambda_{\alpha}\alpha+\gamma-\lambda_{\beta}\beta=x_{i}-x_{j}
\end{equation}

\noindent Thus we establish by virtue of \eqref{function:g}, \eqref{alpha:beta:delta:angle}, \eqref{vector:delta:dfn}, \eqref{alpha:dfn}, \eqref{beta:dfn}, \eqref{mu:alpha:dfn}, \eqref{mu:beta:dfn} and \eqref{tilde:delta}  that
\begin{align*}
\langle (\mu_{\alpha}\alpha-\mu_{\beta}\beta),\tilde{\delta} \rangle=-\langle (g(x_{i})-g(x_{j})),(x_{i}-x_{j}) \rangle\ge 0 & \iff \\
\langle (x_{i}-x_{j}),g(x_{i}) \rangle+\langle (x_{j}-x_{i}),g(x_{j}) \rangle\le 0 &
\end{align*}

\noindent as desired.

\noindent \textbf{Step 2: Proof of forward invariance of $D$ with respect to the solution of \eqref{multiagent:single:integrator}-\eqref{feedback:invariance}.}  \\

We proceed by proving that the control law \eqref{feedback:invariance} also guarantees the desired invariance property for the solutions of system \eqref{multiagent:single:integrator}-\eqref{feedback:invariance}, provided that the input terms $v_i(\cdot)$, $i=1,\ldots,N$ satisfy \eqref{input:individual:bound}. Let $[0,T_{\max})$ be the maximal forward interval for which the solution $x(\cdot)$ of  \eqref{multiagent:single:integrator}-\eqref{feedback:invariance} with $x(0)\in D_{\varepsilon}$ exists and remains inside $D$. We claim that for all $t\in [0,T_{\max})$ the solution remains inside ${\rm cl}(D_{\frac{\tilde{c}-1}{\tilde{c}}\varepsilon})$ with
\begin{equation}\label{constant:tilde:c}
\tilde{c}=\frac{1}{h^{-1}\left(\frac{1}{c}\right)}\iff h\left(\frac{1}{\tilde{c}}\right)=\frac{1}{c};\tilde{c}>1
\end{equation}

\noindent and where $c>1$ and $h(\cdot)$ are given in the statement of the proposition and \eqref{function:h:properties}, respectively. Then, it also follows from the fact that $x(t)$ remains in the compact subset ${\rm cl}(D_{\frac{\tilde{c}-1}{\tilde{c}}\varepsilon})$ of $D$ for all $t\in [0,T_{\max})$, that $T_{\max}=\infty$, which provides the desired result. In order to prove our claim, we need to define certain auxiliary mappings. For each $i\in\{1,\ldots,N\}$ we define the functions
\begin{equation} \label{functions:mi}
m_{i}(t):=\left\lbrace \begin{array}{ll} \varepsilon+|x_{i}(t)|-\mathfrak{R}, & {\rm if}\; x_{i}(t)\in N_{\varepsilon} \\ 0, & {\rm if}\; x_{i}(t)\in D_{\varepsilon} \end{array} \right., \quad t\in [0,T_{\max})
\end{equation}

\noindent and
\begin{equation} \label{function:m}
m(t):=\max\{m_{i}(t):i=1,\ldots,N\},t\in [0,T_{\max})
\end{equation}

\noindent where $m_{i}(t)$ denotes the distance of agent $i$ from $D_{\varepsilon}$ at time $t$ and $m(t)$ is the maximum over those distances for all agents. Hence, for all $t\in [0,T_{\max})$ and all $ \bar{\varepsilon}\in(0,\varepsilon]$ we have the following equivalences
\begin{align}
x_{i}(t)\in N_{\bar{\varepsilon}}&\iff m_{i}(t)\in [\varepsilon-\bar{\varepsilon},\varepsilon) \label{xi:in:Nepsilon} \\
x_{i}(t)\in \partial D_{\bar{\varepsilon}}&\iff m_i(t)=\varepsilon-\bar{\varepsilon} \label{xi:in:bndryDepsilon}
\end{align}

\noindent and for all $i=1,\ldots,N $ that
\begin{equation}\label{xi:in:clsDepsilon}
x_i(t)\in{\rm cl}( D_{\bar{\varepsilon}}),\forall i=1,\ldots,N \iff m(t)\in[0,\varepsilon-\bar{\varepsilon}]
\end{equation}

\noindent Notice that the functions $m_{i}(\cdot)$, $i=1,\ldots,N$ and $m(\cdot)$ are continuous and due to our hypothesis that $x(0)\in D_{\varepsilon}$, satisfy
\begin{equation}\label{function:m:at0}
m(0)=0
\end{equation}

\noindent We claim that
\begin{equation}\label{invariance:equivalent}
m(t)\le \frac{\varepsilon}{\tilde{c}}, \forall t\in [0,T_{\max})
\end{equation}

\noindent with $\tilde{c}$ as given in \eqref{constant:tilde:c}. Indeed, suppose on the contrary that there exists $T\in (0,T_{\max})$ such that
\begin{equation}\label{function:m:atT}
m(T)=\tilde{\varepsilon}\in\left(\frac{\varepsilon}{\tilde{c}},\varepsilon\right)
\end{equation}

\noindent and define
\begin{equation} \label{time:tau:invariance}
\tau:=\min\left\lbrace\tilde{\tau}\in[0,T]:m(t)\ge\frac{1}{2}\left(\tilde{\varepsilon}+\frac{\varepsilon}{\tilde{c}}\right),\forall t\in[\tilde{\tau},T]\right\rbrace
\end{equation}

\noindent Then it follows from \eqref{function:m:atT} that $\tau$ is well defined and from \eqref{function:m:at0}, \eqref{time:tau:invariance} and the continuity of $m(\cdot)$ that
\begin{equation}\label{function:m:attau}
m(\tau)=\frac{1}{2}\left(\tilde{\varepsilon}+\frac{\varepsilon}{\tilde{c}}\right)
\end{equation}

\noindent and that there exists a sequence $(t_{\nu})_{\nu\in\mathbb{N}}$ with
\begin{equation} \label{sequence:tnu:properties}
t_{\nu}\searrow\tau\;{\rm and}\;m(t_{\nu})\ge\frac{1}{2}\left(\tilde{\varepsilon}+\frac{\varepsilon}{\tilde{c}}\right),\forall \nu\in\mathbb{N}
\end{equation}

\noindent From \eqref{function:m}, \eqref{function:m:attau}, \eqref{sequence:tnu:properties} and the infinite pigeonhole principle we deduce that there exists $i\in\{1,\ldots,N\}$ and a subsequence $(t_{\nu_k})_{k\in\mathbb{N}}$ of $(t_{\nu})_{\nu\in\mathbb{N}}$ such that
\begin{equation}  \label{subsequence:tnu:mis}
m_{i}(t_{\nu_k})\ge\frac{1}{2}\left(\tilde{\varepsilon}+\frac{\varepsilon}{\tilde{c}}\right),\forall k\in\mathbb{N}; m_i(\tau)=\frac{1}{2}\left(\tilde{\varepsilon}+\frac{\varepsilon}{\tilde{c}}\right)
\end{equation}

\noindent Thus, it follows by virtue of \eqref{xi:in:Nepsilon} and \eqref{xi:in:bndryDepsilon} that

\begin{equation}  \label{subsequence:tnu:properties}
x_{i}(t_{\nu_k})\in N_{\varepsilon-\frac{1}{2}\left(\tilde{\varepsilon}+\frac{\varepsilon}{\tilde{c}}\right)},\forall k\in\mathbb{N};x_i(\tau)\in\partial D_{\varepsilon-\frac{1}{2}\left(\tilde{\varepsilon}+\frac{\varepsilon}{\tilde{c}}\right)}
\end{equation}

\noindent Set $W(x)=|x|^2$ and notice that due to \eqref{subsequence:tnu:properties} it holds $|x_i(t_{\nu_k})|\ge|x_i(\tau)|$, $\forall k\in \mathbb{N}$. The latter implies that
\begin{equation} \label{Wtilde:derivative:case1}
\left.\frac{d}{dt}W(x_{i}(t))\right|_{t=\tau}=\lim_{k\to\infty}\frac{W(x_{i}(t_{\nu_k}))-W(x_{i}(\tau))}{t_{\nu_k}-\tau}\ge 0
\end{equation}

\noindent On the other hand, we have that
\begin{align} \label{Wtilde:derivative:case2}
\left.\frac{d}{dt}W(x_{i}(t))\right|_{t=\tau}&=\nabla W(x_{i}(\tau))\dot{x}_{i}(\tau) \nonumber \\
&=x_{i}(\tau)^T\left[g(x_{i}(\tau))+v_{i}(\tau)-\sum_{j\in\mathcal{N}_{i}}r(|x_{i}(\tau)-x_{j}(\tau)|)(x_{i}(\tau)-x_{j}(\tau))\right]
\end{align}

\noindent By taking into account \eqref{function:m:atT} and \eqref{subsequence:tnu:properties} we get that
$$
|x_i(\tau)|=\mathfrak{R}-\varepsilon+\frac{1}{2}\left(\tilde{\varepsilon}+\frac{\varepsilon}{\tilde{c}}\right)>\mathfrak{R}-\varepsilon+\frac{\varepsilon}{\tilde{c}}=\mathfrak{R}-\frac{\tilde{c}-1}{\tilde{c}}\varepsilon
$$

\noindent and hence from \eqref{function:h:properties}, \eqref{function:g} and \eqref{constant:tilde:c} that
\begin{equation} \label{eq1}
|g(x_i(\tau))|>c\delta h\left(\frac{\varepsilon+\mathfrak{R}-\frac{\tilde{c}-1}{\tilde{c}}\varepsilon-\mathfrak{R}}\varepsilon\right)=c\delta h\left(1-\frac{\tilde{c}-1}{\tilde{c}}\right)=c\delta h\left(\frac{1}{\tilde{c}}\right)=\delta
\end{equation}

\noindent Also, due to \eqref{function:g} it holds
\begin{equation}\label{eq2}
x_{i}(\tau)=-ag(x_{i}(\tau))
\end{equation}

\noindent for certain $a>0$. Then we get from \eqref{eq1}, \eqref{eq2} and the fact that $|v_{i}(\tau)|\le\delta$ that
\begin{align} \label{Wtilde:derivative:case2:term1}
x_{i}(\tau)^T[g(x_{i}(\tau))+v_{i}(\tau)] & \le x_{i}(\tau)^T g(x_{i}(\tau))+|x_{i}(\tau)||v_{i}(\tau)| \nonumber \\
& =-|x_{i}(\tau)||g(x_{i}(\tau))|+|x_{i}(\tau)||v_{i}(\tau)| \nonumber \\
& =-|x_{i}(\tau)|(|g(x_{i}(\tau))|-|v_{i}(\tau)|)<0
\end{align}

\noindent Furthermore, we have from \eqref{function:m:attau} and \eqref{xi:in:clsDepsilon} that $x_{j}(\tau)\in {\rm cl}(D_{\varepsilon-\frac{1}{2}\left(\tilde{\varepsilon}+\frac{\varepsilon}{\tilde{c}}\right)})$ for all $j\in\mathcal{N}_i$ and from \eqref{subsequence:tnu:properties} that  $x_{i}(\tau)\in N_{\varepsilon-\frac{1}{2}\left(\tilde{\varepsilon}+\frac{\varepsilon}{\tilde{c}}\right)}$. Thus, it follows from Fact VI that
\begin{equation} \label{Wtilde:derivative:case2:term2}
x_{i}(\tau)^T(x_{i}(\tau)-x_{j}(\tau))=\langle x_{i}(\tau),(x_{i}(\tau)-x_{j}(\tau)) \rangle \le 0
\end{equation}

\noindent From \eqref{Wtilde:derivative:case2:term1} and \eqref{Wtilde:derivative:case2:term2}, we obtain that \eqref{Wtilde:derivative:case2} is negative, which contradicts \eqref{Wtilde:derivative:case1}. Hence, \eqref{invariance:equivalent} holds, which implies that $x(t)$ remains in the compact subset ${\rm cl}(D_{\frac{\tilde{c}-1}{\tilde{c}}\varepsilon})$ of $D$ for all $t\in [0,T_{\max})$. Thus, $T_{\max}=\infty$ and we conclude that the solution $x(\cdot)$ of the system remains in $D$ for all $t\ge 0$.
\end{proof}

\section{Conclusions}

We have provided a distributed control scheme which guarantees connectivity of a multi-agent network governed by single integrator dynamics. The corresponding control law is robust with respect to additional free input terms which can further be exploited for motion planning. For the case of a spherical domain, adding a repulsive vector field near the boundary ensures that the agents remain inside the domain for all future times. The latter framework is motivated by the fact that it allows us to abstract the behaviour of the system through a finite transition system and exploit formal method tools for high level planning.

Further research directions include the generalization of the invariance result of Section \ref{Invariance:analysis} for the case where the domain is convex and has smooth boundary and the improvement of the bound on the free input terms, by allowing the bound to be state dependent.

\section{Appendix}

In the Appendix, we provide the proofs of Facts I, II, III and IV which were used in proof of Proposition \ref{connectivity:maintainance} and of Fact V, in proof of Proposition \ref{Invariance:result}. For convenience we state the elementary inequality
\begin{equation} \label{parallelogram:law:consequence}
2(|w|^{2}+|z|^{2})\ge |w-z|^{2},\forall w,z\in\Rat{n}
\end{equation}

\noindent \textbf{Proof of Fact I.} Let $\{e_k\}_{k=1,\ldots,N}$ be an orthonormal basis of eigenvectors corresponding to the ordered eigenvalues of $L_w(x)$. Then, for each $l=1,\ldots,n$ we have that
$$
c_l(x^{\perp})=\sum_{k=2}^N\mu_k e_k;\mu_k\in\Rat{},k=2,\ldots,N
$$

\noindent and hence, that
$$
|c_l(x^{\perp})|=\left(\sum_{k=2}^N\mu_k^2\right)^{\frac{1}{2}}
$$

\noindent Thus, we get that
\begin{align*}
|L_w(x)c_l(x^{\perp})|&=\left|\sum_{k=2}^N\mu_k L_w(x)e_k\right|=\left|\sum_{k=2}^N\mu_k \lambda_k(x)e_k\right| \\
&=\left(\sum_{k=2}^N(\mu_k \lambda_k(x))^2\right)^{\frac{1}{2}}\ge\lambda_2(x)\left(\sum_{k=2}^N\mu_k^2\right)^{\frac{1}{2}}=\lambda_2(x)|c_l(x^{\perp})|
\end{align*}

\noindent which establishes \eqref{fact1}.

\noindent \textbf{Proof of Fact II.} By taking into account the Cauchy Schwartz inequality we obtain
\begin{align*}
\sum_{l=1}^n|c_l(x)||c_l(y)|&\le\left(\sum_{l=1}^n|c_l(x)|^2\right)^{\frac{1}{2}}\left(\sum_{l=1}^n|c_l(y)|^2\right)^{\frac{1}{2}} \\
&=\left(\sum_{l=1}^n\sum_{i=1}^N c_l(x_i)^2\right)^{\frac{1}{2}}\left(\sum_{l=1}^n\sum_{i=1}^N c_l(y_i)^2\right)^{\frac{1}{2}} \\
&=\left(\sum_{i=1}^N\sum_{l=1}^n c_l(x_i)^2\right)^{\frac{1}{2}}\left(\sum_{i=1}^N\sum_{l=1}^n c_l(y_i)^2\right)^{\frac{1}{2}} \\
&=\left(\sum_{i=1}^N|x_i|^2\right)^{\frac{1}{2}}\left(\sum_{l=1}^N|y_i|^2\right)^{\frac{1}{2}}=|x||y|
\end{align*}
\noindent and hence \eqref{fact2} holds.

\noindent \textbf{Proof of Fact III.} By the definition of $x^{\perp}$ and $\bar{x}$, it follows that there exists $\tilde{x}\in\Rat{n}$ such that $x-x^{\perp}=\bar{x}=(\tilde{x},\ldots,\tilde{x})\in\Rat{Nn}$. Hence, we have that
\begin{align*}
|x^{\perp}|=|x-\bar{x}|&=|(x_{1},\ldots,x_{N})-(\tilde{x},\ldots,\tilde{x})|=|(x_{1}-\tilde{x},\ldots,x_{N}-\tilde{x})| \\
&=\left(\sum_{i=1}^{N}|x_{i}-\tilde{x}|^{2}\right)^{\frac{1}{2}} \Rightarrow \\
\sqrt{2(N-1)}|x^{\perp}|&=\left(\sum_{i=1}^{N}2(N-1)|x_{i}-\tilde{x}|^{2}\right)^{\frac{1}{2}}=\left(\sum_{\{i,j\}\in \mathcal{E}(K(\mathcal{N}))}2(|x_{i}-\tilde{x}|^{2}+|x_{j}-\tilde{x}|^{2})\right)^{\frac{1}{2}}
\end{align*}

\noindent where $\mathcal{E}(K(\mathcal{N}))$ stands for the edge set of the complete graph with vertex set $\mathcal{N}$. Then, it follows from \eqref{parallelogram:law:consequence} that
\begin{align*}
\left(\sum_{\{i,j\}\in\mathcal{E}(K(\mathcal{N}))}2(|x_{i}-\tilde{x}|^{2}+|x_{j}-\tilde{x}|^{2})\right)^{\frac{1}{2}}&\ge \left(\sum_{\{i,j\}\in\mathcal{E}(K(\mathcal{N}))}|x_{i}-x_{j}|^{2}\right)^{\frac{1}{2}} \\
&\ge \left(\sum_{\{i,j\}\in\mathcal{E}}|x_{i}-x_{j}|^{2}\right)^{\frac{1}{2}}=|\Delta x|
\end{align*}

\noindent which provides the desired result.

\noindent \textbf{Proof of Fact IV.} Notice that \eqref{fact4} is equivalently written as
\begin{align*}
\sqrt{2}|x^{\perp}|\ge \max_{\{i,j\}\in\mathcal{E}}|x_{i}-x_{j}|&\iff 2|x^{\perp}|\ge \max_{\{i,j\}\in\mathcal{E}}|x_{i}-x_{j}|^{2} \\
&\iff 2\left(\sum_{i=1}^{N}|x_{i}-\tilde{x}|^{2}\right)\ge \max_{\{i,j\}\in\mathcal{E}}|x_{i}-x_{j}|^{2}
\end{align*}

\noindent with $\tilde{x}\in\Rat{n}$  as in proof of Fact III. Let $\{\hat{i},\hat{j}\}\in\mathcal{E}$ such that $|x_{\hat{i}}-x_{\hat{j}}|=\max_{\{i,j\}\in\mathcal{E}}|x_{i}-x_{j}|$. Then, by taking into account \eqref{parallelogram:law:consequence} we have
\begin{align*}
2\left(\sum_{i=1}^{N}|x_{i}-\tilde{x}|^{2}\right)\ge 2(|x_{\hat{i}}-\tilde{x}|^{2}+|x_{\hat{j}}-\tilde{x}|^{2})\ge |x_{\hat{i}}-x_{\hat{j}}|^{2}=\max_{\{i,j\}\in\mathcal{E}}|x_{i}-x_{j}|^{2}
\end{align*}

\noindent and thus \eqref{fact4} is fulfilled.

\noindent \textbf{Proof of Fact V.} By taking into account \eqref{alpha:beta:normalized}-\eqref{lambda:mu:inequalities} and \eqref{vector:delta:dfn} we evaluate
\begin{align*}
\langle (\mu_{\alpha}\alpha-\mu_{\beta}\beta),\tilde{\delta} \rangle & =\langle (\mu_{\alpha}\alpha-\mu_{\beta}\beta),(\lambda_{\alpha}\alpha-\lambda_{\beta}\beta)+\gamma\rangle \\
& = \langle (\mu_{\alpha}\alpha-\mu_{\beta}\beta),(\lambda_{\alpha}\alpha-\lambda_{\beta}\beta)\rangle+\mu_{\alpha}\langle \alpha,\gamma \rangle-\mu_{\beta}\langle \beta,\gamma \rangle \\
& \ge \langle (\mu_{\alpha}\alpha-\mu_{\beta}\beta),(\lambda_{\alpha}\alpha-\lambda_{\beta}\beta)\rangle \\
& = \mu_{\alpha}\lambda_{\alpha}|\alpha|^{2}-(\mu_{\alpha}\lambda_{\beta}+\mu_{\beta}\lambda_{\alpha})\langle \alpha,\beta \rangle+\mu_{\beta}\lambda_{\beta}|\beta|^{2} \\
& \ge \mu_{\alpha}\lambda_{\alpha}|\alpha|^{2}-(\mu_{\alpha}\lambda_{\beta}+\mu_{\beta}\lambda_{\alpha})|\alpha||\beta|+\mu_{\beta}\lambda_{\beta}|\beta|^{2} \\
& = \mu_{\alpha}\lambda_{\alpha}-\mu_{\alpha}\lambda_{\beta}-\mu_{\beta}\lambda_{\alpha}+\mu_{\beta}\lambda_{\beta} \\
& = \mu_{\alpha}(\lambda_{\alpha}-\lambda_{\beta})-\mu_{\beta}(\lambda_{\alpha}-\lambda_{\beta})=(\mu_{\alpha}-\mu_{\beta})(\lambda_{\alpha}-\lambda_{\beta})\ge 0 \\
\end{align*}

\noindent and hence \eqref{alpha:beta:delta:angle} holds.

\section{Acknowledgements}

This work was supported by the EU STREP RECONFIG: FP7-ICT-2011-9-600825, the H2020 ERC Starting Grant BUCOPHSYS and the Swedish Research Council (VR).

\end{document}